\newtheorem{theorem}{Theorem}
\newcommand{\game}{flowNCG\xspace}
\newcommand{\avggame}{avg-flowNCG\xspace}
\newcommand{\mingame}{min-flowNCG\xspace}
\newcommand{\stp}{\mathbf{s}} 
\newcommand{\opt}{OPT\xspace}
\newcommand{\bA}{well-connected\xspace}
\title{Flow-Based Network Creation Games}
\author{
Hagen Echzell\and
Tobias Friedrich\and
Pascal Lenzner\and
Anna Melnichenko
}
\date{Hasso Plattner Institute, University of Potsdam, Germany}
\begin{document}
\maketitle
\begin{abstract}
Network Creation Games~(NCGs) model the creation of decentralized communication networks like the Internet. In such games strategic agents corresponding to network nodes selfishly decide with whom to connect to optimize some objective function. Past research intensively analyzed models where the agents strive for a central position in the network. This models agents optimizing the network for low-latency applications like VoIP. However, with today's abundance of streaming services it is important to ensure that the created network can satisfy the increased bandwidth demand. To the best of our knowledge, this natural problem of the decentralized strategic creation of networks with sufficient bandwidth has not yet been studied.  

We introduce Flow-Based NCGs where the selfish agents focus on bandwidth instead of latency. In essence, budget-constrained agents create network links to maximize their minimum or average network flow value to all other network nodes. Equivalently, this can also be understood as agents who create links to increase their connectivity and thus also the robustness of the network.
For this novel type of NCG we prove that pure Nash equilibria exist, we give a simple algorithm for computing optimal networks, we show that the Price of Stability is~1 and we prove an (almost) tight bound of $2$ on the Price of Anarchy. Last but not least, we show that our models do not admit a potential function.     
\end{abstract}

\section{Introduction}
Many of the networks we crucially rely on nowadays have evolved from small centrally designed networks into huge networks created and decentrally controlled by many selfish agents with possibly conflicting goals. For example, the structure of the Internet is essentially the outcome of a repeated strategic interaction by many selfish economic agents~\cite{Pap01,T04}, e.g., Internet service providers (ISPs). Hence, such Internet-like networks can be analyzed by considering a strategic game which models the interaction of the involved agents. This insight has inspired researchers to propose game-theoretic models for the decentralized formation of networks, most prominently the models in \cite{Myerson13,JW96,BG00,fabrikant2003network} and many variants of them.
In these models selfish agents select strategies to maximize their utility in the created network. Thus, the structure of the created network and corresponding agent utilities depend on the selfishly chosen strategies of all agents.

To the best of our knowledge, the utility of the involved agents in all the existing models depends on the agents' centrality or the size of their connected component and the cost spent for creating links. Striving for centrality, e.g., for low hop-distances to all other nodes in the created network, certainly plays an important role for the involved selfish agents, e.g., ISPs, but it is not the only driving force. What has been completely neglected so far are bandwidth considerations. 

In this paper we take the first steps into this largely unexplored area of game-theoretic network formation with bandwidth maximization. In particular, we consider budget-constrained agents that strategically form links to other agents to maximize their flow value towards all other nodes. This models agents that optimize the networks for data-intensive applications like online streaming instead of low-latency applications like VoIP. Interestingly, by the Max Flow Min Cut Theorem~\cite{ford_fulkerson_1956}, flow maximization corresponds to connectivity maximization which is tightly related to network robustness. While there have been works which consider robustness aspects as side constraints, also the aspect of maximizing the created network's robustness is, to the best of our knowledge, entirely unexplored.

While classical network formation models with focus on centrality yield equilibrium networks that are sparse and have low diameter, our work shows that focusing on bandwidth/connectivity may explain why densely connected sub-networks, like $k$-core structures, and larger cycles appear in real-world networks: they are essential for its connectivity.

\subsection{Model and Notation}
We propose the \textit{Flow-Based Network Creation Game (\game)}, which is a variant of the well-known Network Creation Game~\cite{fabrikant2003network} where the agents create edges to maximize their flow value as defined by the classical Max-Flow Problem~\cite{ford_fulkerson_1956}. Given a set of $n$ selfish agents which corresponds to the set of nodes $V$ of a weighted directed network $G=(V,E)$. Each agent has a uniform budget of $k\in\mathbb{N}$, with $k<n$, to establish connections to other agents. In the created graph $G$ each edge has a capacity $c:E(G)\rightarrow \mathbb{N}$. We define the \emph{degree} of a node $v$ in $G$ as $deg_G(v) = \sum_{(x,v)\in E,x\in V}c(x,v) + \sum_{(v,y)\in E,y\in V}c(v,y)$, i.e., as the sum of the capacities of all incoming and outgoing edges.

The edge set and capacities are defined by the strategies of the agents. In particular, each agent strategically decides how to spend its budget, i.e., which subset of incident edges to buy and for each bought edge its capacity. Therefore, the strategy $S_v$ of an agent $v$ is a set of tuples $(x,c(v,x))\in V\setminus\{v\}\times [k]$, where $[k] = \{1,\dots,k\}$. Here $x$ denotes the node to which an edge is formed and $c(v,x)$ the capacity of that edge bought by agent~$v$. For nodes $w\in V$ to which no edge is formed, we assume that $c(v,w) = 0$. A strategy $S_v$ is feasible if the total capacity of all edges in $S_v$ does not exceed the given budget $k$, i.e., $\sum_{x\in V\setminus\{v\}}{c(v,x)}\leq k$. If $(x,c(v,x))\in S_v$, we call $v$ the \textit{owner} of the edge $(v,x)$ with capacity $c(v,x)$. The vector of all agents' strategies $\mathbf{s}=(S_{v_1},\ldots, S_{v_n})$ is denoted as a \textit{strategy profile}. Any strategy profile $\stp$ uniquely specifies the weighted directed network $G(\stp)=(V, E(\stp))$ where $(v,x)\in E(\stp)$ of capacity $c(v,x)$ if and only if agent $v$ buys the edge to $x$ with non-zero capacity $c(v,x)$. 

For sending flow in the network $G(\stp)$, we assume its undirected version $F(\stp) = (V,E_F(\stp))$ where $\{v,x\} \in E_F(\stp)$ if $(v,x) \in E(\stp)$ or $(x,v) \in E(\stp)$ and the capacity $c(\{v,x\})$ is defined as $c(\{v,x\}) = c(v,x) + c(x,v)$. Thus, for sending flow, any edge of $G(\stp)$ can be 
used in both directions. See Figure~\ref{fig:graph_instance} for an illustration of $G(\stp)$ and its corresponding undirected version $F(\stp)$.  
\begin{figure}[h]

\center{\includegraphics[width=7cm]{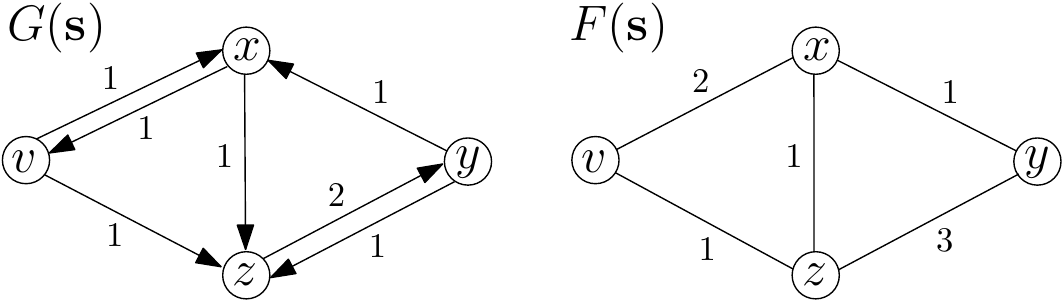}}

\caption{Graph $G(\stp)$ and its undirected version $F(\stp)$ for $\stp = (S_v,S_x,S_y,S_z)$ with $S_v = \{(x,1),(z,1)\}$, $S_x = \{(v,1),(z,1)\}$, $S_y = \{(x,1),(z,1)\}$, $S_z=\{(y,2)\}$. Agent $z$ has degree $5$ and can send a flow of value 3, 3 and 4 to the agents $v, x$ and $y$ respectively.}
\label{fig:graph_instance}
\end{figure} 
Note that throughout the paper we only work with $G(\stp)$ and we will always assume its undirected version $F(\stp)$ when computing flow values.

We call a directed weighted network $G$ \textit{feasible} if the sum over the weights of all incident outgoing edges for each node is at most $k$. Thus, we have a bijection between the set of feasible networks and the set of strategy profiles. Therefore, we will further omit $\stp$ in $G(\stp)$ if it is clear from the context and we will use $G$ and $\stp$ interchangeably if $G = G(\stp)$.

Let $\lambda_{G(\stp)}(v,x)=\sum_{v_1\in V_1, v_2\in V_2, \{v_1,v_2\}\in E_F(\stp)}{c(\{v_1,v_2\})}$ be the capacity of a minimum $v$-$x$-cut $(V_1,V_2)$ disconnecting node $v$ from $x$ in $F(\stp)$ which we also call the \textit{local edge connectivity} between $x$ and $v$. The minimum over all pairwise local edge connectivities in $G(\stp)$ is called the \textit{edge connectivity} of $G(\stp)$ and is denoted as $\lambda(G(\stp))$. 
A cut $(V_1, V_2)$ in $G(\stp)$, where $V_1\cup V_2 = V$ and $V_1\cap V_2=\emptyset$, such that $\lambda(G(\stp))=\sum_{v_1\in V_1, v_2\in V_2, \{v_1,v_2\}\in E_F(\stp)}{c(\{v_1,v_2\})}$ is called a \textit{minimum cut (min-cut)} of $G(\stp)$. Note that cuts in $G(\stp)$ are always defined on its undirected version $F(\stp)$.
By the Max Flow Min Cut Theorem~\cite{ford_fulkerson_1956}, the size of the min-$v$-$x$-cut equals the value of the max-flow between nodes $v$ and $x$. Hence, we use the local edge connectivity of $v$ and $x$ interchangeably with the value of maximum $v$-$x$-flow. 

We define two versions of the \game. The first is called the \textit{Average Flow NCG (\avggame)}. For a given strategy profile $\stp$, the utility function of an agent $v$ is defined as the average maximum flow value to each other node in the network: $u(v, \stp)=\sum_{i\in V\setminus \{v\}}{\frac{\lambda_{G(\stp)}(v,i)}{n-1}}$. The \textit{social utility} of network $G(\stp)$ is the average agent utility, i.e., $u(\stp) = \sum_{v\in V}\frac{u(v, \stp)}{n}.$ 

The second variant is called the \textit{Minimum Flow NCG (\mingame)}. The utility of an agent in network $G(\stp)$ is the size of the min-cut of the network $\lambda(G(\stp))$. 
Note that this value is the same for all agents in $G(\stp)$. 
We define a tie-breaking between strategies for an agent which yield the same min-cut of $G(\stp)$: agents try to maximize the number of nodes with local edge connectivity exceeding $\lambda(G(\stp))$, i.e., to increase the number of highly robust connections. In particular, we define the agent's utility as the following vector: $u(v,\stp) =\left(u_1(v,\stp), u_2(v, \stp)\right)$, where $u_1(v,\stp) = \lambda(G(\stp))$ and $u_2(v,\stp) =  \vert\{i \in V\!\setminus\!\{v\}:\ \lambda_{G(\stp)}(i,v)>\lambda(G(\stp))\}\vert$.
We call the second component of the utility function $u_2(v,\stp)$ as the number of \emph{\bA nodes}. We assume that agents aim for maximizing their utility vector lexicographically.
We define the social utility of a network in the \mingame as the edge connectivity of the network: $u(\stp)=\lambda(G(\stp)).$
The \textit{social optimum (\opt)} is  a network $G(s^*)$ which maximizes the social utility $u(s^*)$ over all feasible strategy profiles.

We say agent $v$ \textit{reduces the capacity} of the $v$-$x$ edge by $\ell$ if it either decreases the capacity $c(v,x)$ by $\ell$ (if $c(v,x)>\ell$) or if it deletes the edge $(v,x)$ (if $c(v,x)=\ell$). Analogously, agent $v$ \textit{increases the capacity} of the $v$-$x$ edge if it either increases the capacity $c(v,x)$ by $\ell$ (if $(x, c(v,x))\in S_v$) or if it buys the edge $(v,x)$ with capacity $\ell$.  

An \textit{improving move} for agent $v_i$ is a strategy change from $S_{v_i}$ to $S'_{v_i}$ such that $u(v_i, (s'_{v_i}, s_{-v_i})) > u(v_i, s)$, where $(s'_{v_i}, s_{-v_i})$ is a new strategy profile which only differs from $s$ in the strategy of agent $v_i$. We say that agent $v$ plays its \textit{best response} $S_v$ if there is no improving move for agent $v$. A strategy change towards a best response is called \textit{best response move}. A sequence of best response moves which starts and ends with the same network is called a \textit{best response cycle}. If every sequence of improving moves is finite, then the game has the \textit{finite improving property (FIP)} or, equivalently, the game is a potential game~\cite{monderer1996potential}.

We say that a network $G(\stp)$ is in \textit{pure Nash equilibrium (NE)} if all agents play a best response in $G(\stp)$. We measure the loss of social utility due to the lack of a central authority and the agents’ selfishness with the \textit{Price of Anarchy (PoA)}~\cite{koutsoupias1999worst}. Let $minNE(n,k)$($maxNE(n,k)$) be the minimum (maximum) social utility of a NE with $n$ agents with budget $k$, and let $opt(n,k)$ be the corresponding utility of the social optimum. For given $n$ and $k$ the PoA is $\frac{opt(n,k)}{minNE(n,k)}$ and the \textit{Price of Stability (PoS)} is $\frac{opt(n,k)}{maxNE(n,k)}$. The latter measures the minimum sacrifice in social utility for obtaining a NE network.

\subsection{Related Work}
The NCG was proposed in~\cite{fabrikant2003network} and has been an object of intensive study for almost two decades. The model depends on an edge-price parameter $\alpha$ which defines the cost of any edge of the network and the agents' objective function is to minimize their closeness centrality, i.e., their average hop-distance to all other nodes in the created network. A constant PoA was shown for almost all values of $\alpha$, except for the range where $\alpha \approx n$, where only an upper bound of $o(n^\varepsilon)$, for any $\varepsilon > 0$, is known~\cite{De07}. See~\cite{alvarez2019price,BiloL20} for the latest improvements on the PoA and a more detailed discussion. Moreover,~\cite{KL13} proved that the NCG does not have the FIP. 

Many variants of the NCG have been proposed and analyzed but so far all of them consider agents which strive for centrality or only for creating a connected network. Related to our model are versions where agents can only swap edges~\cite{ADHL13,MS12}, bounded-budget versions~\cite{laoutaris2008bounded,ehsani2015bounded} and two versions which focus on robustness: in~\cite{MMO15} agents maintain two vertex-disjoint paths to any other node and in~\cite{CLMM16} agents consider their expected centrality with respect to a single random edge failure. Thus agents in these models enforce 2-vertex-connectivity or 2-edge-connectivity in the created network which are much weaker robustness concepts compared to our approach. Another line of research are Network Formation Games~\cite{BG00} where agents only strive for being connected to all other nodes. Several variants with focus on robustness have been introduced: In~\cite{BG00_reliability} a model with probabilistic edge failures, in~\cite{Kli11,kliemann2017swap} adversarial models where a single edge is removed after the network is formed,  in~\cite{Goyal16,FriedrichIKLNS17} a version where a node is attacked with deterministic spread to neighboring nodes, and in~\cite{chen2019network} a model with probabilistic spread. To the best of our knowledge, no related model exists where agents try to maximize their connectivity.  

In the realm of classical optimization, maximizing the robustness of a network with a given budget is a frequently studied network augmentation problem, see, e.g.,~\cite{watanabe1987edge,frank1994connectivity,nagamochi2008algorithmic} for surveys. However, due to the centralized optimization view, these problems are very different from our model.

 

\subsection{Our Contribution}

By incorporating bandwidth considerations into the classical NCG we shed light on a largely unexplored area of research. In our Flow-Based NCG agents strategically create links under a budget constraint to maximize their average or minimum flow value to all other agents. This is in stark contrast to existing models which focus on agents aiming for centrality or for maximizing the size of their connected component. 

For our novel modes, we provide an efficient algorithm to compute a social optimum network and we uncover important structural properties of it.
A major part of our research is a rigorous study of the structure and quality of the induced equilibrium networks. We show that NE networks are guaranteed to be connected and that they contains a subgraph that is at least $(k+1)$-edge-connected. Moreover, any NE in the \mingame is always $(k+1)$-edge-connected. Most importantly, we prove that the social utility of all NE networks is close to optimum. More precisely, our PoA results for both models guarantee that the social utility of any NE is at least half the optimal utility, i.e., PoA $\leq 2$, and this bound is tight for the \mingame and almost tight for the \avggame. 
Besides this, we prove for both versions that the PoS $=1$ and that the FIP does not hold.  
See Table~\ref{tab:contribution} for an overview. 

Due to limited space, some proofs are sketched or omitted.

\begin{table*}[h]
\begin{center}
\renewcommand{\arraystretch}{1.3}
\begin{tabular}{@{} l l l l @{}l l l @{}}
\toprule
	\textbf{Model} & \textbf{u(\opt)} & \textbf{PoS} & \textbf{PoA} && \textbf{FIP} \\
	\hline
	\textbf{\mingame} & $2k$ (Thm.~\ref{thm:opt_produced_by_the_alg}) & $1$ (Thm.~\ref{thm:PoS}) & $\frac{2k}{k+1}$ (Thm.~\ref{thm:PoA_mingame}) && no (Thm.~\ref{thm:IRC}) \\
	\cmidrule{1-6}
	\multirow{3}{*}{\textbf{\avggame}} & \multirow{3}{*}{$2k$ (Thm.~\ref{thm:opt_produced_by_the_alg})} & \multirow{3}{*}{$1$ (Thm.~\ref{thm:PoS})} & \makecell[l]{$< 2$ (Thm.~\ref{thm:PoA_UB_avggame}) } && \multirow{3}{*}{no (Thm.~\ref{thm:IRC})} \\
	&&&$\ge \frac{2k}{k+\frac{k(k-1)}{n-1}}$ & if $k\leq 0.5+\sqrt{n-0.75}$ (Thm.~\ref{thm:PoA_LB_avggame})\\
	&&& $\ge \frac{2k}{k+1}$  & if $k> 0.5+\sqrt{n-0.75}$ (Thm.~\ref{thm:PoA_LB_avggame})
\end{tabular}
\caption{Overview of our results}
\label{tab:contribution}
\end{center}
\end{table*}

\section{Social Optimum}
We analyze the structure of the \opt networks. In particular, we show a generic network construction with maximum social utility $2k$ for both game models.

\begin{algorithm2e}[h]
\SetKwInOut{Input}{input}
\SetKwInOut{Output}{output}
 \Input{ unweighted directed clique $K_n\!=\!(V,E)$}
 \Output{ social optimum network $G$}
 $G\leftarrow (V,\emptyset)$\; 
 \For{$i\leftarrow 1$ \KwTo $k$}{
 find a directed Hamiltonian cycle $C_n$ in $K_n$\; 
 add the cycle to the output graph $G\leftarrow G\cup C_n$, i.e., increase capacity of each $(u,v)$-connection in $G$ by 1 if $(u,v)\in C_n$\;
 }  	
	\caption{Algorithm for computing the \opt }\label{alg:opt_graphs}
\end{algorithm2e}

\begin{theorem}\label{thm:opt_produced_by_the_alg}
Algorithm~\ref{alg:opt_graphs} computes an optimal network with social utility $2k$ for the \avggame\ and \mingame in polynomial time.
\end{theorem}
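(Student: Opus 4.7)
The plan is to combine a uniform upper bound on the social utility of any feasible network with a matching lower bound realized by the algorithm's output. Concretely, I would first show that $u(\stp) \le 2k$ in both models for every strategy profile $\stp$, and then verify that Algorithm~\ref{alg:opt_graphs} returns a feasible network in which every pair of agents already has local edge connectivity exactly $2k$.

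For the upper bound, I would exploit the budget constraint to bound the total undirected degree. The sum of out-capacities across all agents is at most $nk$, and this equals the total undirected capacity of $F(\stp)$, so $\sum_{v\in V} \deg_G(v) \le 2nk$. Because the set of edges incident to a single vertex $v$ forms a valid $v$-$x$-cut, the Max Flow Min Cut Theorem gives $\lambda_{G(\stp)}(v,x) \le \deg_G(v)$ for every pair. In the \avggame, this immediately yields $u(v,\stp) \le \deg_G(v)$, hence $u(\stp) = \frac{1}{n}\sum_v u(v,\stp) \le \frac{2nk}{n} = 2k$. In the \mingame, the same averaging gives $\lambda(G(\stp)) \le \min_v \deg_G(v) \le 2k$.

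For the construction side, I would first observe that each iteration of the loop contributes exactly one outgoing edge of capacity $1$ to every node, so after $k$ iterations every agent's strategy has total capacity exactly $k$, certifying feasibility. The key combinatorial step is that any Hamiltonian cycle in $K_n$ must cross every non-trivial cut $(V_1,V_2)$ an even and non-zero number of times, hence at least twice. Since capacities of parallel edges in $F(\stp)$ add, summing over the $k$ iterations shows that every non-trivial cut of the output network has capacity at least $2k$, so $\lambda(G) \ge 2k$. Combined with the universal upper bound, $\lambda(G) = 2k$, giving $u(\stp) = 2k$ in the \mingame. For the \avggame, the same cut lower bound gives $\lambda(v,x) \ge 2k$ for every pair, while the degree of every node after $k$ cycles is exactly $2k$, so $\lambda(v,x) \le \deg_G(v) = 2k$; hence all local connectivities equal $2k$ and $u(\stp) = 2k$ as well.

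Polynomial running time is immediate: a Hamiltonian cycle in $K_n$ is produced in $O(n)$ by any vertex permutation, and the loop iterates $k < n$ times, giving total cost $O(nk)$. I do not anticipate a serious obstacle; the only slightly delicate point is ensuring that the cut lower bound survives even if the algorithm happens to pick identical Hamiltonian cycles in different iterations, which is resolved by the additivity of capacities on parallel edges in $F(\stp)$.
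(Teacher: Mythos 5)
Your proposal is correct and follows essentially the same route as the paper: the upper bound via the handshake lemma and the degree bound on local edge connectivity, and the matching lower bound from the fact that each Hamiltonian cycle contributes at least two units of capacity to every non-trivial cut. Your explicit handling of repeated Hamiltonian cycles via additivity of parallel capacities is a small extra care the paper leaves implicit, but the argument is the same.
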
 
\begin{proof}
First, we prove a general upper bound on the social utility of \opt. Let $G$ be an optimal network in the \mingame. The social utility of $G$ equals the edge connectivity of $G$ which is at most $\min_{v\in V}{deg_G(v)}$. The handshake lemma yields $\sum_{v\in V}{deg_G(v)}=2\sum_{e\in E(G)}{c(e)}\leq 2nk$, where the last inequality holds because every agent can build edges of the total capacity of at most $k$. Since the minimum degree is at most the average degree in the network, we get $u(\stp)=\lambda(G(\stp))\leq \min_{v\in V}{deg_G(v)}\leq 2k$.

In the \avggame, the social utility is $u(\stp)=\frac{1}{n}\sum_{v\in V}{u(v,\stp)}$ where $u(v,\stp)$ is the average local edge connectivity over all pairs $(v,V\setminus\{v\})$. The local edge connectivity of any pair of nodes $v, w$ is at most $\min\{deg_G(v), deg_G(w)\}$. Thus, $u(\stp)\leq \frac{1}{n}\sum_{v\in V}deg(v)\leq 2k$, again by the handshake lemma.

Now we show that the social utility of the graph $G$ produced by the algorithm matches the upper bound which implies that $G$ is optimal. Since each node in $G$ has an outdegree of $k$ and no self-loops, $G$ is a feasible network for both games. For any two nodes $v, w\in V$ the minimum $v$-$w$-cut contains two edges of each Hamiltonian cycle added to $G$ during the algorithm's run. Thus, the local edge connectivity of any pair of nodes equals $2k$ which implies $u(G)=2k$ in both games.

It is easy to see that the algorithm runs in polynomial time since the Hamiltonian cycle can be constructed greedily.
\end{proof}

\begin{theorem}
In the \mingame and the \avggame OPT is a connected $2k$-regular network. 
\end{theorem}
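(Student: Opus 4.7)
The plan is to squeeze the result out of the tightness of the inequality chain already appearing in the proof of Theorem~\ref{thm:opt_produced_by_the_alg}. Since that theorem establishes $u(\opt)=2k$ in both models, any inequality that was used to obtain the upper bound $2k$ must now hold with equality, and I would read off regularity and connectivity from this.

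For the \mingame, the chain in the proof reads
\[
2k \;=\; \lambda(G)\;\leq\;\min_{v\in V} deg_G(v)\;\leq\;\frac{1}{n}\sum_{v\in V} deg_G(v)\;\leq\;2k,
\]
so every step is an equality. The middle equality says that the minimum degree equals the average degree, which forces $deg_G(v)=2k$ for every $v$. For the \avggame, I would first refine the trivial bound by noting that for each $v$, $u(v,\stp)=\frac{1}{n-1}\sum_{w\neq v}\lambda_{G(\stp)}(v,w)\leq deg_G(v)$, because $\lambda_{G(\stp)}(v,w)\leq\min(deg_G(v),deg_G(w))\leq deg_G(v)$ (the single-vertex cut around $v$). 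Averaging over $v$ and applying the handshake lemma gives $u(\stp)\leq\frac{1}{n}\sum_v deg_G(v)\leq 2k$, and optimality again forces equality throughout. In particular $u(v,\stp^*)=deg_G(v)$ for all $v$, hence $\lambda_{G(\stp^*)}(v,w)=deg_G(v)$ for every pair, which forces $deg_G(v)\leq deg_G(w)$ for every $v,w$; by symmetry all degrees are equal, and since the average is $2k$ they are all equal to $2k$.

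Connectivity is then free in both models: we have shown $\lambda_{G(\stp^*)}(v,w)\geq 2k>0$ for every pair of distinct nodes, so any two nodes lie in the same connected component. The only mildly delicate step is the regularity argument for the \avggame, where one has to recognise that equality in $\lambda(v,w)\leq\min(deg(v),deg(w))$ holding simultaneously for every pair $v,w$ makes each $deg_G(v)$ the minimum degree, which compresses the degree sequence to a single value. Beyond this, the proof is a direct bookkeeping exercise on the bounds already proved, so I do not anticipate any substantial obstacle.
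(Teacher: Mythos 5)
Your proposal is correct: the paper states this theorem without proof, and extracting it from the tightness of the inequality chain in Theorem~\ref{thm:opt_produced_by_the_alg} is exactly the intended argument. Both the regularity step (equality of minimum and average degree in the \mingame; equality $u(v,\stp)=deg_G(v)$ forcing $\lambda_{G}(v,w)=deg_G(v)=deg_G(w)$ for all pairs in the \avggame) and the connectivity step ($\lambda>0$) are sound.
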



\section{Structural Properties of \\Equilibria}
In this section we prove structural properties of NE networks, which we will use later on to derive our PoA bounds.

\begin{theorem}\label{thm:all_budget_use_minFNCG}
In the \mingame and the \avggame all agents use all of their k budget units in an NE network.
\end{theorem}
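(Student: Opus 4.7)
The plan is to argue by contradiction: suppose some agent $v$ has $\ell \geq 1$ unused budget units in an NE and exhibit an improving move. The move in both games is the same flavor, namely invest the $\ell$ spare units into a single undirected edge $\{v,w\}$ of $F(\stp)$ (either by creating a new directed edge $(v,w)$ of capacity $\ell$ or by increasing an existing $(v,w)$-edge's capacity by $\ell$). Since every $v$-$w$ cut in $F(\stp)$ must cross $\{v,w\}$, this raises the capacity of every such cut by exactly $\ell$, so $\lambda_{G(\stp)}(v,w)$ strictly increases by $\ell$. All other local edge connectivities $\lambda_{G(\stp)}(v,i)$ are monotone in edge capacities, hence do not decrease.

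For the \avggame\ the argument ends immediately: for any choice of $w \neq v$ the sum $\sum_{i \neq v} \lambda_{G(\stp)}(v,i)$ strictly increases by at least $\ell$, so $u(v,\stp)$ strictly increases, contradicting the NE property.

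For the \mingame\ we must choose $w$ carefully, because a mere rise of $\lambda_{G(\stp)}(v,w)$ need not change either coordinate of $u(v,\stp) = (u_1,u_2)$. The key selection lemma is: there exists $w \neq v$ with $\lambda_{G(\stp)}(v,w) = \lambda(G(\stp))$. To see this, take any global min-cut $(A,B)$ of $G(\stp)$; it has capacity $\lambda(G(\stp))$, and $v$ lies on one side, say $A$, while $B \neq \emptyset$. Any $w \in B$ is separated from $v$ by $(A,B)$, so $\lambda_{G(\stp)}(v,w) \leq \lambda(G(\stp))$; combined with $\lambda_{G(\stp)}(v,w) \geq \lambda(G(\stp))$ by definition, we get equality. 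Now add $\ell$ to $\{v,w\}$ for such a $w$, and split into cases. If $\lambda(G(\stp))$ strictly increases, then $u_1$ improves and we are done. Otherwise $\lambda(G(\stp))$ is unchanged, while $\lambda_{G(\stp)}(v,w)$ has jumped from $\lambda(G(\stp))$ to $\lambda(G(\stp))+\ell$; hence $w$ has become \bA with respect to $v$, and no node that was already \bA loses that property (since local connectivities only grow and $\lambda(G(\stp))$ stays put). Therefore $u_2$ strictly increases, improving $v$'s utility vector lexicographically.

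The main obstacle I expect is exactly this case analysis in the \mingame: picking an arbitrary $w$ can fail to improve $u_2$ (if $w$ was already \bA) and also fail to improve $u_1$ (if the boosted edge is not pivotal for any global min-cut). The selection lemma tying $w$ to a global min-cut is what rescues the argument, and it depends on the elementary but crucial observation that if $(A,B)$ is a min-cut then $\lambda_{G(\stp)}(a,b) = \lambda(G(\stp))$ for every $a \in A$, $b \in B$.
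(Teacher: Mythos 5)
Your proposal is correct and follows essentially the same route as the paper: in both games you add the spare capacity to a single incident edge, and for the \mingame you target a node $w$ with $\lambda_{G(\stp)}(v,w)=\lambda(G(\stp))$ and then case-split on whether the global edge connectivity rises (improving $u_1$) or stays put (in which case $w$ becomes \bA and $u_2$ strictly increases). Your write-up merely makes explicit two steps the paper leaves implicit, namely the existence of such a $w$ via a global min-cut and the monotonicity argument showing no node loses its \bA status.
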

\begin{proof}
Consider a network $G=G(\stp)$ which is in NE. Assume towards a contradiction that there is an agent $v \in V$ which spends strictly less than $k$ units of its budget in $G (\stp)$.

For the \mingame, we have, by definition, $u(v,\stp)=(u_1(v,\stp), u_2(v,\stp))=(\lambda(G), u_2(v,\stp))$. Let $w\in V$ be an agent such that $\lambda_G(v,w)=\lambda(G)$. Then agent $v$ can improve its strategy by increasing the capacity of the edge $(v,w)$. Indeed, after the strategy change ($s$ to $s' = (s_v', s_{-v})$) either the edge connectivity of the network increases, i.e., $\lambda(G(s'))>\lambda(G(s))$ or the edge connectivity does not change but $u(v,s')=(u_1(v,s),u_2(v,s'))\geq(u_1(v,s),u_2(v,s)+1)>u(v,s)$. In both cases agent $v$ improves, which contradicts that $G$ is in NE. 

For the \avggame it is easy to see, that building an additional edge towards any agent $w \in V\setminus\{v\}$ increases their local edge connectivity, while not decreasing the local edge connectivity between any pair of agents, and thus this results in a strategy $s' = (s_v', s_{-v})$ such that $u_v(s') > u_v(s)$. Thus, agent $v$ can improve, which contradicts that $G$ is in NE.
\end{proof}

\begin{theorem}\label{thm:NE_is_connected_minFNCG}
In the \mingame and the \avggame any equilibrium network is connected.
\end{theorem}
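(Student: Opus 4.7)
The plan is to assume for contradiction that an NE network $G$ is disconnected, then exhibit an improving move for some agent that contradicts the equilibrium condition. The modification I would consider is small: a single agent shifts one unit of capacity from an already-bought edge to a new edge that crosses between components.

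More concretely, let $C$ be a smallest connected component of $G$ with $|C|=m$, and pick another component $C'$; by minimality of $C$ we have $|C'|\ge m$. Fix any $v\in C$ and any $w\in C'$. By Theorem~\ref{thm:all_budget_use_minFNCG}, $v$ spends all $k$ budget units, so $v$ owns some edge $(v,x)$ with $x\in C$. Consider the strategy $S_v'$ obtained from $S_v$ by reducing $c(v,x)$ by $1$ (deleting the edge if $c(v,x)=1$) and increasing $c(v,w)$ by $1$ (creating it if absent). I would then show that this move strictly improves $u(v,\cdot)$ in both models.

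For the \mingame, before the move $u(v,\stp)=(0,m-1)$ since $\lambda(G)=0$ and only the $m-1$ other nodes of $C$ are reachable from $v$. After the move, $v$ is connected to all of $C'$, so the component of $v$ in $G(\stp')$ has size at least $|C'|+1 \ge m+1$ (the $+1$ accounting for $v$ itself and the remaining reachable part of $C$). Hence either $\lambda(G(\stp'))>0$, strictly improving $u_1$, or $\lambda(G(\stp'))=0$ and the number of \bA nodes grows to at least $m>m-1$, strictly improving $u_2$. Either way we contradict the NE assumption. For the \avggame, I would bound the total flow change from $v$: every node $i\in C'$ now has $\lambda_{G(\stp')}(v,i)\ge 1$, contributing a gain of at least $|C'|\ge m$; meanwhile, for each node $i\in C\setminus\{v\}$ the max-flow can drop by at most $1$, because reducing a single edge's capacity by one can reduce any min-cut by at most one. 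This yields a total decrease of at most $m-1$, so $u(v,\stp')-u(v,\stp)\ge \tfrac{1}{n-1}(m-(m-1))>0$.

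The only subtlety, and the step I would write most carefully, is the flow-loss bound in the \avggame when $c(v,x)=1$ and the edge is actually deleted, since this can change which nodes remain reachable from $v$ within $C$. The clean way to handle this uniformly is through the Max Flow Min Cut Theorem: any cut's capacity drops by at most $1$ when a single edge's capacity drops by $1$, so the min $v$-$i$-cut value (equivalently, the max flow) changes by at most $1$ regardless of whether the edge is deleted or merely thinned. With that observation in hand, the gain from the $\ge m$ newly-reachable nodes in $C'$ dominates the loss over the $m-1$ nodes in $C\setminus\{v\}$, completing the contradiction in both games.
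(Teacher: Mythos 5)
Your proof is correct, and for the \avggame{} it is essentially the paper's argument: reduce one unit of capacity on an owned edge, redirect it to a (weakly) larger component, and observe that each local edge connectivity within the old component drops by at most $1$ (via the cut argument) while every node of the target component gains at least $1$, giving a net gain of at least $\frac{1}{n-1}$. For the \mingame{} you take a genuinely different route. The paper first argues that every weakly connected component contains a directed cycle (since all agents exhaust their budget and have an outgoing edge), and has a cycle-agent delete a cycle edge, so that the component provably stays connected and no \bA{} nodes are lost. You instead allow the component to fall apart: by choosing $C$ to be a \emph{smallest} component and targeting a component $C'$ with $\lvert C'\rvert \ge \lvert C\rvert = m$, you get $u_2(v,\stp') \ge \lvert C'\rvert \ge m > m-1 = u_2(v,\stp)$ even in the worst case where $v$ retains nothing of $C$. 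This buys a more uniform treatment (the same ``smallest component'' counting works verbatim for both utility functions) and avoids the cycle-existence step; the paper's version buys a move that loses nothing at all, which is slightly stronger per move but requires the extra structural observation. Both arguments are sound; your handling of the edge-deletion subtlety via the Max Flow Min Cut Theorem is exactly the right way to make the $\avggame$ loss bound airtight.
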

\begin{proof}
For both games assume towards a contradiction that there is a disconnected equilibrium network $G$. Note that $\lambda(G)=0$. By Theorem~\ref{thm:all_budget_use_minFNCG} all agents use all of their budgets, thus, every vertex has an outgoing edge in $G$. As the number of nodes is finite, there must be a directed cycle in every weakly connected component of $G$. Thus, any agent $v$ in such a cycle can remove an existing outgoing cycle-edge without disconnecting the weakly connected component and add the edge to any node of another component in~$G$. 

In the \mingame, this move either increases the edge-connectivity of the network, i.e., it increases $u_1(v,G)$, or it increases the number of \bA agents, i.e., it increases $u_2(v,G)$ and $u_1(v,G)$ remains unchanged. 
Thus, agent $v$ can improve which contradicts that $G$ is in NE.

For the \avggame, consider two weakly connected components $X, Y\subseteq V$ in $G$ such that $|X|\leq |Y|$. Thus there must be an agent $v$ in $X$ which can reduce the capacity of one connection in $X$ by 1 and add an edge of capacity 1 to some other node in $Y$. This move changes only the local edge-connectivity between $v$ and any other node by at most $1$. Thus, its utility changes by at least $-\frac{|X|-1}{n-1}+\frac{|Y|}{n-1}>0$. Thus, agent $v$ can improve which contradicts that $G$ is in NE.
\end{proof}



\noindent The next property will be about \emph{$j$-clusters}, which are subset of nodes $C \subseteq V, |C| \geq 2$, from $G$ such that the $C$-induced subgraph of $G$ has edge connectivity of at least~$j$.

\begin{theorem}\label{thm:cluster}
For the \mingame and the \avggame, every NE network contains a $(k + 1)$-cluster.
\end{theorem}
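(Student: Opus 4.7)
The plan is to proceed by contradiction. Suppose $G(\stp)$ is an NE that contains no $(k+1)$-cluster; then in particular every undirected edge of $F(\stp)$ has capacity at most $k$, for otherwise its two endpoints would already form a two-node $(k+1)$-cluster.

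The first step is to apply Theorem~\ref{thm:all_budget_use_minFNCG}: since every agent uses its full budget, the total incoming capacity in $G(\stp)$, $\sum_v\sum_u c(u,v)$, equals the total outgoing capacity $nk$. By averaging, some agent $v$ has incoming capacity at least $k$, so there exists an agent $w$ with $c(w,v)\ge 1$. I would then consider the deviation $s_v'$ in which $v$ invests its entire budget $k$ on the single edge $(v,w)$. In the resulting network the undirected edge $\{v,w\}$ has capacity $c(w,v)+k\ge k+1$, so $\{v,w\}$ is already a $(k+1)$-cluster in $G(s_v',s_{-v})$.

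It remains to show that this deviation is strictly improving for $v$, contradicting that $\stp$ is an NE. For the \mingame\ the heavy edge $\{v,w\}$ in the post-deviation network forces either $\lambda(G(s_v',s_{-v}))>\lambda(G(\stp))$, improving $u_1(v,\stp')$, or, if $\lambda$ is unchanged, $w$ newly counts as a \bA node for $v$, improving $u_2(v,\stp')$; in either subcase the lexicographic utility strictly grows. For the \avggame\ the change in $v$'s utility is analysed by pairing: the gain $\lambda_{G(\stp')}(v,w)-\lambda_{G(\stp)}(v,w)$ is at least $1$ (and typically larger, since the original flow is capped by the no-$(k+1)$-cluster assumption), while the aggregate drop over the pairs $(v,x)$ with $x\ne w$ is upper bounded by the capacity that $v$ redirects away from those edges.

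The step I anticipate as hardest is the \avggame\ bookkeeping: the naive ``all-in'' deviation can reduce $v$'s flows to several third parties at once, so one must choose $w$ as the neighbour maximising $c(w,v)$ and then use the no-$(k+1)$-cluster assumption to uniformly bound the changes in each $\lambda(v,x)$. A secondary subtlety in the \mingame\ is ruling out that the deviation accidentally decreases $\lambda$ through some minimum cut disjoint from $\{v,w\}$; this should be controlled by choosing $v$ and $w$ on opposite sides of an existing minimum cut so that the redirected capacity genuinely crosses the bottleneck.
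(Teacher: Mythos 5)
Your reduction to a single ``all-in'' deviation does not close, and the gaps sit exactly where you flag the difficulty. In the \mingame, after $v$ moves its entire budget onto $(v,w)$ the quantity $u_1(v,\cdot)=\lambda(G)$ can strictly \emph{decrease}: $v$ deletes all of its other outgoing edges, and these may be the only capacity across some cut far away from $\{v,w\}$ (in the extreme case the deviation disconnects the graph and $\lambda$ drops to $0$). Your proposed fix --- choosing $v$ and $w$ on opposite sides of an existing minimum cut --- is incompatible with how you selected them (by averaging over incoming capacity), and even with such a choice the edges $v$ deletes can still destroy other cuts. The tie-breaking component fares no better: $u_2(v,\cdot)$ counts \emph{all} nodes $i$ with $\lambda_G(i,v)>\lambda(G)$, and the deviation can demote many such nodes while promoting only $w$; moreover $w$ may already have been \bA, since the absence of a $(k+1)$-cluster does not bound $\lambda_G(v,w)$ --- local connectivity is measured in the whole graph, whereas a cluster requires the \emph{induced} subgraph to be highly connected. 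In the \avggame the accounting is worse: the gain on the single pair $(v,w)$ is at most $(k-c(v,w))/(n-1)$ and can be $0$ (if $\lambda_G(v,w)$ was already at least $c(w,v)+k$ via indirect paths), while the loss is spread over up to $n-2$ pairs, each of which can drop by up to $k-c(v,w)$; no choice of $w$ makes this a net improvement in general.

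The paper avoids deviations entirely here. It invokes Theorem~\ref{thm:all_budget_use_minFNCG} only to conclude that the total edge capacity of a NE network is $nk$, and then runs a purely combinatorial partitioning argument on $F(\stp)$: repeatedly delete the edges of any cut of capacity at most $k$; each deletion creates at least one new connected component while removing at most $k$ units of capacity, so reaching $n$ singleton components would remove at most $(n-1)k<nk$ units, leaving positive capacity in a graph whose components are all single nodes --- impossible without self-loops. Hence the process must halt on a component of size at least $2$ admitting no cut of capacity at most $k$, i.e.\ a $(k+1)$-cluster. If you want to salvage a deviation-based proof, you would first need a lemma that certifies an improving move from the absence of a cluster; the counting argument is both shorter and sidesteps all of the flow bookkeeping.
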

\begin{proof}
Consider a NE network $G(\stp)$. We partition $G(\stp)$, or more precisely its undirected version $F(\stp)$, into components by the following method: As long as there is a cut $(X,Y)$ of size at most $k$, remove all edges of $(X,Y)$ from $G$. At each step of the algorithm either the algorithm terminates, i.e., there is a $(k+1)$-cluster in $G$, or the number of connected components increases by at least 1 while the number of edges decreases by at most $k$. By Theorem~\ref{thm:all_budget_use_minFNCG} the sum of edge weights in the initial graph is~$nk$. Hence, if a $(k+1)$-cluster was not found, the algorithm produces a network containing $n$ components and a set of edges of the total capacity of at least $k$. Since self-loops are not allowed, we have a contradiction. Thus, the process must find a $(k+1)$-cluster. 
\end{proof}



	
\section{Price of Stability}
\begin{theorem}\label{thm:PoS}
The PoS is 1 in the \mingame and the \avggame.
\end{theorem}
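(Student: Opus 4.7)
The plan is to exhibit an optimum network that is itself a pure Nash equilibrium, which immediately gives $\text{PoS}=1$ in both variants. The natural candidate is the network $G^*$ produced by Algorithm~\ref{alg:opt_graphs}, i.e.\ the union of $k$ edge-disjoint directed Hamiltonian cycles. By Theorem~\ref{thm:opt_produced_by_the_alg}, $G^*$ is $2k$-regular (every agent spends its full budget $k$ on outgoing edges and also receives $k$ units of incoming capacity from others), and the local edge connectivity $\lambda_{G^*}(v,w)$ equals $2k$ for every pair of nodes, since each Hamiltonian cycle contributes $2$ to every $v$--$w$-cut. So in the \avggame every agent has utility exactly $2k$, and in the \mingame every agent has $u_1(v,\stp^*)=2k$ while $u_2(v,\stp^*)=0$ (no local edge connectivity can strictly exceed $\lambda(G^*)=2k$).

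The key lemma to establish is that after \emph{any} unilateral deviation $s_v'$ of agent $v$, the resulting network $G'$ still satisfies $\lambda_{G'}(v,i)\leq 2k$ for every $i\in V$ and $\lambda(G')\leq 2k$. This follows because $v$'s outgoing capacity is bounded by its budget $k$, while the other agents' strategies are unchanged, so the incoming capacity at $v$ is still exactly $k$; hence $deg_{G'}(v)\leq 2k$. Then the standard bound $\lambda_{G'}(v,i)\leq \min\{deg_{G'}(v),deg_{G'}(i)\}\leq 2k$ applies.

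Given this, the \avggame case is immediate: since every summand in $u(v,\stp')=\sum_{i\neq v}\lambda_{G'}(v,i)/(n-1)$ is at most $2k$, we get $u(v,\stp')\leq 2k=u(v,\stp^*)$, so $v$ has no improving move and $G^*$ is an NE. For the \mingame, the bound gives $u_1(v,\stp')=\lambda(G')\leq 2k=u_1(v,\stp^*)$. If the inequality is strict, the deviation is worse; if $\lambda(G')=2k$, then $u_2(v,\stp')$ counts nodes $i$ with $\lambda_{G'}(i,v)>2k$, but the bound forbids this, so $u_2(v,\stp')=0=u_2(v,\stp^*)$. In either case the lexicographic utility does not strictly increase, so $G^*$ is an NE.

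The main point that needs care — and the only place a potential subtlety lies — is the lexicographic comparison in the \mingame: one must rule out the scenario in which an agent sacrifices $u_1$ to gain on $u_2$, but since lexicographic improvement requires a strict gain in $u_1$ or equality in $u_1$ with a strict gain in $u_2$, the degree bound rules both possibilities out simultaneously. Beyond that, the proof is a direct saturation argument matching the upper bounds already used in the proof of Theorem~\ref{thm:opt_produced_by_the_alg}.
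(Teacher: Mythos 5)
Your proposal is correct and follows essentially the same approach as the paper: exhibit an optimal network that is itself a Nash equilibrium, using the bound that local edge connectivity cannot exceed the minimum degree of the pair. The only (inessential) differences are that the paper's witness is the single directed cycle with capacity-$k$ edges (the degenerate case of Algorithm~\ref{alg:opt_graphs} in which the same Hamiltonian cycle is chosen $k$ times) and that the paper bounds the degree of the deviator's \emph{neighbor} to show a strict utility decrease, whereas you bound the deviator's own degree to show no strict increase --- both suffice for the NE claim.
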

\begin{proof}
Consider a directed cycle $C$ where every edge has capacity $k$. We will show that $C$ is a NE in both games. 

For the \mingame, consider a node, say $v_1$, in $C=G(\stp)$. Agent $v_1$ owns one edge, w.l.o.g., $(v_1,v_2)$ of capacity $k$. The only strategy change that $v_1$ can perform is to reduce the capacity of the edge $(v_1,v_2)$ and to spend the rest of the budget on new edges to other nodes. This move strictly decreases the degree of the node $v_2$. Since the edge connectivity of any network is at most its minimum degree, which then is the degree of $v_2$, this implies that the strategy change decreases $v_1$'s utility. Hence, there is no improving move for agent $v_1$, which, by symmetry, implies that every agent plays best response in $C$ which proves that $C$ is in NE.

Analogously, in the \avggame the same strategy change by any of the agents, say $v_1$, decreases the degree of the respective neighbor $v_2$. Thus, the local edge connectivity of this pair of nodes decreases. At the same time, the local edge connectivity of $v_1$ with other nodes does not change since it does not exceed the minimum degree of the pair, i.e., the degree of node $v_1$. This implies a decrease of the average local edge connectivity for agent $v_1$, i.e., the decrease of $v_1$'s utility. Thus, all agents play best response and $C$ is a NE.
  
By Theorem~\ref{thm:opt_produced_by_the_alg} the social utility of \opt is $2k$. We have proved the existence of a NE network with the same social utility for both models. Thus, $PoS=1$ for both games. 
\end{proof}

\section{Price of Anarchy}
In this section we present our main results, which are a tight bound in the PoA for the \mingame and an almost tight PoA bound for the \avggame. Moreover, whereas all previous results show the similarity of the \mingame and the \avggame we will now highlight their differences. In particular, we show that NE networks in the \mingame may not be NE networks in the \avggame and vice versa.

\subsection{PoA of the Min-FlowNCG }
We start by providing a NE network for the \mingame with minimum social utility.
\begin{theorem}\label{thm:UB_NE_mingame}
For a given $n$ and $k$ such that $k<n$, there is a NE network  $G$ such that $u(G)=k+1$ in the \mingame.
\end{theorem}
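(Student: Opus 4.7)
The plan is to exhibit an explicit NE network $G$ with $\lambda(G)=k+1$. The natural approach is to force a single ``bottleneck'' vertex $v^\star$ to have degree exactly $k+1$, thereby capping $\lambda(G)\le k+1$, and to arrange the remaining vertices so that every other cut has capacity at least $k+1$, giving equality. This matches the lower bound $k+1$ on $\lambda$ implied by Theorem~\ref{thm:cluster} combined with $(k+1)$-edge-connectivity of NE networks noted in the contribution overview.

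Concretely, the candidate construction I would propose is: first, run Algorithm~\ref{alg:opt_graphs} on the vertex set $\{v_1,\dots,v_{n-1}\}$, producing a $2k$-regular, $2k$-edge-connected core in which every agent uses all $k$ budget units; then attach $v^\star$ by letting $v^\star$ buy the edge $(v^\star,v_{n-1})$ with capacity $k$ and letting $v_{n-1}$ divert one unit of its budget from a single core-edge purchase to buy $(v_{n-1},v^\star)$ with capacity $1$. Every agent uses its full $k$ budget; $\deg(v^\star)=k+1$, so the singleton cut $\{v^\star\}$ has capacity $k+1$; and the modified core still has edge-connectivity at least $2k-1$, which is $\ge k+1$ for $k\ge 2$. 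Hence $\lambda(G)=k+1$. The edge case $k=1$ is immediately covered by the directed cycle of Theorem~\ref{thm:PoS}, for which $\lambda=2=k+1$.

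For the NE verification I would argue in three cases. First, $v^\star$'s out-capacity is $k$ (Theorem~\ref{thm:all_budget_use_minFNCG}) and its in-capacity is the $1$ unit bought by $v_{n-1}$ regardless of $v^\star$'s own strategy; so $\deg(v^\star)=k+1$ is preserved, $\lambda$ cannot be lifted by $v^\star$, and $u_2(v^\star)=0$ always since every local connectivity involving $v^\star$ is bounded by $\deg(v^\star)=\lambda$. Second, any deviation by a core agent that does not touch $v^\star$ leaves the cut $\{v^\star\}$ intact, so $\lambda$ stays at $k+1$; internal $u_2$-increments are ruled out by the symmetric $2k$-edge-connected core, in which local connectivities between core pairs are already saturated at $2k$. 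Third, the delicate case is a deviation that diverts additional capacity of some core agent $v_i$ toward $v^\star$: this raises $\deg(v^\star)$ beyond $k+1$, so $\lambda$ could grow unless an offsetting core cut drops by the same amount.

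This last case is the main obstacle. Making it work uniformly in $k$ may require refining the construction---for instance by using $k+1$ parallel pendant vertices each attached via a single unit of capacity from a different core agent (so that no single agent can simultaneously lift the degrees of all bottlenecks past $k+1$), or by replacing the OPT core with a minimally $(k+1)$-edge-connected backbone in which every core edge lies on a tight $(k+1)$-cut, so that any core capacity subtraction strictly drops some core cut to $k$. In either refinement, the deviation either leaves $\lambda$ unchanged or strictly decreases it, and the corresponding $u_2$-analysis reduces to observing that a unit of capacity redirected at $v^\star$ gains $v_i$ at most one new well-connected node ($v^\star$ itself) while losing at least one through the broken core cut.
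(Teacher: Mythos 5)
There is a genuine gap, and you essentially flag it yourself. Your base construction (a $2k$-edge-connected OPT core on $n-1$ vertices plus a pendant $v^\star$ of degree $k+1$) is not a Nash equilibrium: the network has a \emph{unique} tight cut, namely the singleton $\{v^\star\}$, and a single core agent can repair it. Concretely, for $k\ge 4$ a core agent $v_i$ can reduce one core edge by $1$ and buy $(v_i,v^\star)$ with capacity $1$; this raises $\deg(v^\star)$ to $k+2$ while every core cut retains capacity at least $2k-2\ge k+2$, so $\lambda$ strictly increases to $k+2$ and $u_1$ improves. Your ``delicate case'' is thus not delicate but fatal, and the two refinements you gesture at are not carried out. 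The $k+1$-parallel-pendants variant additionally needs the core to exist (roughly $n\ge 3k+2$), whereas the theorem is claimed for all $k<n$; the ``minimally $(k+1)$-edge-connected backbone'' is not instantiated as a feasible strategy profile, and neither variant comes with the required $u_2$ (tie-breaking) analysis, which is not optional in the \mingame since $u_1$ is constant across agents.

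The paper's construction is entirely different and is built precisely around the principle your second refinement hints at: it creates a \emph{nested family} of tight $(k+1)$-cuts that no single agent's budget can simultaneously augment. Specifically, it takes a weighted path $v_1,\dots,v_k$ in which consecutive nodes are joined by antiparallel edges of capacities $k-(i-1)$ and $i$ (total $k+1$ across each path edge), attached to a cycle $v_k,v_{k+1},\dots,v_n,v_k$ whose edges have capacity $k$ except for one closing edge $(v_k,v_n)$ of capacity $1$. Every prefix cut $\{v_1,\dots,v_j\}$ and every cycle cut through the capacity-$1$ edge has capacity exactly $k+1$, and the ownership pattern is arranged so that any unilateral reallocation drops some node's degree to $k$ or leaves one of the tight cuts untouched. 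If you want to salvage your approach, you would need to replace the single bottleneck $v^\star$ by such a family of tight cuts whose owners are spread across distinct agents, and then do the case analysis on $u_1$ and $u_2$ explicitly; as written, the proposal does not establish the theorem.
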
   
\begin{proof}[Proofsketch]
We consider the following generic construction for a given number of agents $n$ and budget $k<n$. Consider a set of nodes $\{v_1, \ldots, v_n\}=V(G)$ and a set of edges $E(G)$ where for all $1\leq i<k$ each node $v_i$ is connected with a node $v_{i+1}$ by an edge $(v_i,v_{i+1})$ with capacity $c(v_i, v_{i+1})=k-(i-1)$ and by an edge $(v_{i+1}, v_i)$ with capacity $c(v_{i+1}, v_i)=i$. Moreover, there is an edge $(v_k, v_n)$ with capacity 1 and a set of sequential edges, for all $ k\leq i\leq n-1, (v_{i+1}, v_i)$ with capacity $k$. See Figure~\ref{fig:NE_mingame} for illustration of the construction. We claim that $G$ is in NE in the \mingame. We omit the proof due to the space restrictions.
\end{proof}
\begin{figure}[ht]
\centering\includegraphics[width=.4\textwidth]{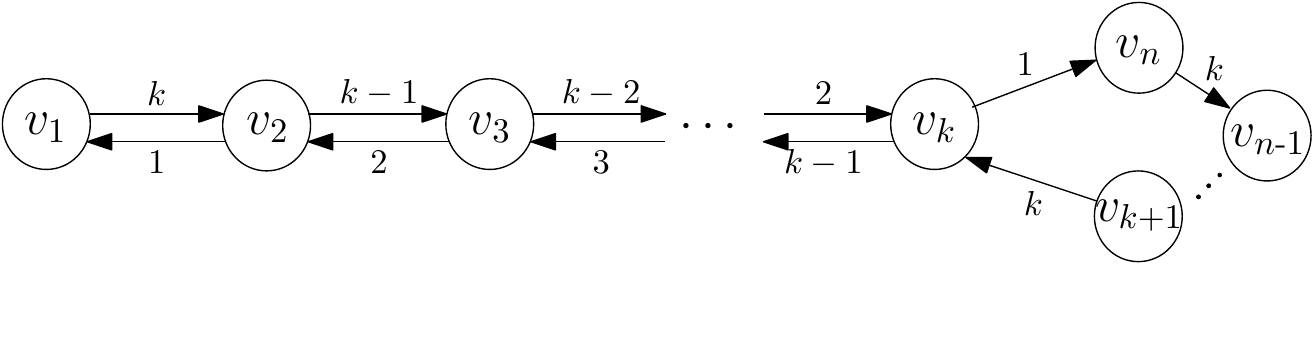}
\caption{NE network in the \mingame. Note that this network is not a NE in the \avggame. Indeed, agent $v_2$ can delete the edge $(v_2,v_1)$ and create the edge $(v_2,v_n)$ with capacity 1.}
\label{fig:NE_mingame}
\end{figure}

\noindent The next result shows that any NE network in the \mingame is $(k+1)$-edge-connected.

\begin{theorem}\label{thm:LB_NE_utility_mingame}
The social utility of any NE network is at least $k+1$ in the \mingame.
\end{theorem}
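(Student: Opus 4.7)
The plan is to argue by contradiction: assume the NE network $G$ satisfies $\lambda(G)\le k$ and construct an improving move for some agent, contradicting the equilibrium property. Pick a min-cut $(X,Y)$ of $G$ with $|E_F(X,Y)|=\lambda(G)\le k$. By Theorem~\ref{thm:cluster}, $G$ contains a $(k{+}1)$-cluster $C$. Since the induced subgraph $G[C]$ has edge connectivity at least $k{+}1$, any non-trivial bipartition of $C$ would already contribute at least $k{+}1$ edges to any cut of $G$ splitting it; hence the small cut $(X,Y)$ cannot separate $C$, and we may assume $C\subseteq X$. Theorem~\ref{thm:all_budget_use_minFNCG} also tells us that every agent spends its full $k$ units, which pins down total capacities inside $X$, inside $Y$, and across the cut.

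The core step is to locate an agent capable of redirecting one unit of capacity across the cut into $C$. In the generic case $|Y|\ge 2$, averaging the $Y$-agents' contributions to the cut (total at most $k$, spread over $|Y|\ge 2$ agents) produces a $Y$-agent $v$ with at most $k/2$ capacity on cut edges and hence at least $k/2$ capacity on internal $Y$-edges; the proposed deviation is for $v$ to trade one unit off a carefully chosen internal $Y$-edge for a new unit edge to some $c\in C$. In the boundary case $Y=\{w\}$, Theorem~\ref{thm:all_budget_use_minFNCG} forces $\deg(w)=k$ and that no $X$-agent buys an edge into $w$; the handshake identity $\sum_{u\in X}\deg(u)=2k(n{-}1)+k$ then yields a node $u^*\in X$ with $\deg(u^*)\ge 2k{+}1$, so some $X$-agent $u$ whose outgoing edges include $u^*$ (existing because the incoming capacity into $u^*$ is at least $k{+}1$) can redirect one unit from $(u,u^*)$ to a new edge $(u,w)$, leaving $\deg_{G'}(u^*)\ge k{+}1$ and raising the cut across $(X,\{w\})$ to $k{+}1$.

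The main obstacle is verifying that this one-unit shift does not create some new cut of value below $k{+}1$ elsewhere in $G'$. Since exactly two undirected edge capacities change by $\pm 1$, every cut of $G$ differs from its value in $G'$ by at most one, so cuts of $G$ of value at least $k{+}2$ remain safe, and the cut $(X,Y)$ itself rises by $+1$ to at least $k{+}1$. The delicate part is cuts of $G$ whose value equals $\lambda(G)$ or $\lambda(G){+}1$: here we use that the $(k{+}1)$-cluster $C\subseteq X$ forbids any tight cut strictly inside $X$ (it would split $C$), which rules out most potentially problematic cuts; the remaining cases are handled by exploiting the freedom to choose which internal edge to drop, namely one whose other endpoint $v'$ satisfies $\deg(v')\ge \lambda(G){+}2$, whose existence follows from further degree-sum arguments. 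In any residual borderline configuration where $\lambda(G')$ cannot be pushed strictly above $\lambda(G)$, the lexicographic tie-breaker on $u_2$ closes the gap: the same redirection creates the new \bA pair $(u,w)$ with $\deg_{G'}(w)\ge k{+}1$, strictly increasing the deviating agent's $u_2$ without decreasing $\lambda$, again contradicting the NE assumption and establishing $\lambda(G)\ge k{+}1$.
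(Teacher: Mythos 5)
Your overall plan --- contradiction via a one-unit redirection across a small cut, powered by Theorems~\ref{thm:all_budget_use_minFNCG} and~\ref{thm:cluster} --- is the same as the paper's, but you point the redirection in the opposite direction, and that is exactly where the argument breaks. The paper's deviator is a node $x$ \emph{inside} the $(k+1)$-cluster $C\subseteq X$ that owns a unit on an intra-cluster edge $(x,v)$ and moves that unit to a node $w\in Y$ on the far side of the cut. Taking the reduced edge inside $C$ is the load-bearing choice: after the reduction, $x$ and $v$ remain at least $k$-edge-connected inside the cluster (the paper in fact argues $\lambda_{G'}(x,v)\geq k+1$), and since any cut whose value decreases under the move must contain $\{x,v\}$ and hence separate $x$ from $v$, no cut of $G'$ can fall below $\lambda(G)$. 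That is what guarantees the move is improving (either $u_1$ rises, or $u_1$ is unchanged and the tie-breaker $u_2$ strictly rises via the pair $(x,w)$). Your deviator instead sits in $Y$ and reduces an edge \emph{internal to $Y$}, where no connectivity guarantee whatsoever is available: if that edge lies on some min-cut of $G$ running through the interior of $Y$, the reduction drops $\lambda$ to $\lambda(G)-1$, the move is not improving in the \mingame, and no contradiction is obtained.

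Your two patches do not close this gap. The claim that $C\subseteq X$ ``forbids any tight cut strictly inside $X$'' is both inaccurate (a tight cut inside $X$ merely cannot split $C$) and irrelevant, since the dangerous cuts are those containing the reduced edge, which lives in $Y$. Choosing the dropped edge so that its other endpoint $v'$ has $\deg(v')\geq\lambda(G)+2$ only protects the singleton cut $(\{v'\},V\setminus\{v'\})$; a tight cut containing $\{v,v'\}$ with two or more nodes on each side still falls below $\lambda(G)$. The same defect recurs in your $|Y|=1$ case, where $\deg(u^*)\geq 2k+1$ controls only the cut isolating $u^*$ and not the other cuts through $\{u,u^*\}$. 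Finally, even in the benign subcase $\lambda(G')=\lambda(G)$ you assert that $u_2$ strictly increases without verifying that no previously \bA pair loses that status --- the paper needs the bound $\lambda_{G'}(x,v)\geq k+1$ precisely for this verification. The repair is to pick the deviator and the reduced edge inside the $(k+1)$-cluster and redirect outward across the small cut, as the paper does.
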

\begin{proof}
Assume towards a contradiction that there is a NE $G=(V,E)$ such that it contains a cut $(A,B)$ such that $\lambda(G)=\sum_{e\in (A,B)}c(e)\leq k$. By Theorem~\ref{thm:cluster} there is an induced subgraph $K=(C,E(K))\subseteq G$ such that $C$ is a $(k+1)$-cluster in $G$. W.l.o.g., let $C\subseteq A$. Consider two nodes $v\in C$ and $w\in B$. Since $k+1\leq\lambda(K)\leq deg_{K}(v)$, there is a node $x\in K$ which has an edge to $v$ in $K$.

We will show that agent $x$ can significantly improve on its strategy by reducing the capacity of the edge $(x,v)$ by 1 and creating the edge $(x,w)$ with capacity $1$ (or increasing its capacity by 1 if it already exists). We denote the obtained networks as $G'=(V',E')$ and $K'=(C,E'')$. Note that $\lambda_{G'}(x,v)\geq \lambda_{K'}(x,v)+1\geq \lambda_K(x,v)-1+1\geq k+1.$ 

If $\lambda(G')>\lambda(G)$, then agent $x$ improved on its strategy since $u_1(x,G')>u_1(x,G)$. Thus, $G$ is not a NE. 

If $\lambda(G')<\lambda(G)$, there is a cut $(X,Y), x\in X, v\in Y$ in $G'$ of size strictly less than $\lambda(G)\leq k$. On the other hand, the size of the $(X,Y)$-cut is at least $\lambda_{G'}(x,v)\geq k+1>\lambda(G)$. Hence, we have a contradiction.

Finally, we consider the case $\lambda(G')=\lambda(G)$. We will show that $u_2(x,G')>u_2(x,G)$. First, note that for any node $a\in V\setminus\{x\}$ such that $\lambda_G(x,a)>\lambda(G)$, $\lambda_G'(x,a)>\lambda(G')$ holds. Indeed, in case $\lambda_{G'}(x,a)<\lambda_G(x,a)$, we have  $\lambda_{G'}(x,a)=\lambda_{G'}(x,v)\geq k+1>\lambda(G)=\lambda(G')$. In case  $\lambda_{G'}(x,a)\geq\lambda_G(x,a)$, and since  $\lambda_G(x,a)>\lambda(G)$, we have $\lambda_{G'}(x,a)>\lambda(G)=\lambda(G')$. Therefore, after the strategy change $u_2(x,G')\geq u_2(x,G)$ holds. Second, $\lambda_G(x,w)= \lambda(G)$, since $x\in A$ and $w\in B$, whereas $\lambda_{G'}(x,w)>\lambda(G')$, i.e., $u_2(x,G')> u_2(x,G)$. Hence, $u(x,G')=(u_1(x,G),u_2(x,G'))>u(x,G)$, i.e., agent $x$ does not play a best response in $G$. This contradicts the assumption that $G$ is a NE.
\end{proof}
\noindent Now we put the pieces together to derive our tight bound.
\begin{theorem}\label{thm:PoA_mingame}
The PoA in the \mingame is $\frac{2k}{k+1}$.
\end{theorem}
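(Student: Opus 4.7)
The plan is to assemble the tight bound directly from the three preceding results, since each side of the ratio $\frac{2k}{k+1}$ has already been pinned down. I first observe that the denominator in the PoA is the minimum social utility achievable in any NE, while the numerator is the optimum social utility, both in the \mingame with $n$ agents and budget $k$.

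For the numerator, I would invoke Theorem~\ref{thm:opt_produced_by_the_alg}, which establishes that $u(\opt)=2k$ for the \mingame. For the denominator, I would combine two previous statements: Theorem~\ref{thm:LB_NE_utility_mingame} shows $u(\stp)\geq k+1$ for every NE profile $\stp$, giving $\mathrm{minNE}(n,k)\geq k+1$, and Theorem~\ref{thm:UB_NE_mingame} exhibits an explicit NE network $G$ with $u(G)=k+1$, giving $\mathrm{minNE}(n,k)\leq k+1$. Thus $\mathrm{minNE}(n,k)=k+1$ exactly.

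Putting these together, the PoA is
\[
\frac{opt(n,k)}{\mathrm{minNE}(n,k)} \;=\; \frac{2k}{k+1},
\]
which matches the claim. No substantial new argument is required: the real work lives in Theorems~\ref{thm:opt_produced_by_the_alg}, \ref{thm:UB_NE_mingame} and~\ref{thm:LB_NE_utility_mingame}. The only minor subtlety worth double-checking in the write-up is that the construction used for Theorem~\ref{thm:UB_NE_mingame} is valid for all admissible $n$ and $k$ with $k<n$, so that the upper bound on $\mathrm{minNE}(n,k)$ holds throughout the parameter range in which the PoA is claimed; since that construction was defined exactly for $k<n$, no additional case analysis is needed.
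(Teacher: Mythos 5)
Your proposal is correct and follows exactly the paper's own argument: the optimum value $2k$ from Theorem~\ref{thm:opt_produced_by_the_alg} combined with the lower bound $k+1$ on every NE from Theorem~\ref{thm:LB_NE_utility_mingame} gives PoA $\leq \frac{2k}{k+1}$, and the explicit NE of Theorem~\ref{thm:UB_NE_mingame} shows tightness. No further comment is needed.
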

\begin{proof}
By Theorem~\ref{thm:opt_produced_by_the_alg} and Theorem~\ref{thm:LB_NE_utility_mingame} we get that $\text{PoA}\leq \frac{2k}{k+1}$. By Theorem~\ref{thm:UB_NE_mingame} there is a NE network that meets the social utility lower bound, hence the PoA bound is tight.
\end{proof}

\subsection{PoA of the Avg-FlowNCG }
We start with providing constructions for two NE network with almost minimum social utility.
\begin{theorem}\label{thm:UB_NE_avggame1}
For $n$ and $k$ with $2\leq k<n$, there is a NE network  $G$ such that $u(G)=k+\frac{k(k-1)}{n-1}$ in the \avggame.
\end{theorem}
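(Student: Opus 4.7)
The plan is to exhibit an explicit NE network $G^\ast$ in the \avggame and verify by a direct computation that $u(G^\ast) = k + \frac{k(k-1)}{n-1}$. The construction generalises the min-game NE from Theorem~\ref{thm:UB_NE_mingame}: I partition the $n$ agents into $\lceil n/k\rceil$ \emph{clusters} of size $k$ (adjusting sizes at the boundary when $k\nmid n$) arranged on a cycle, so that every pair of agents in the same cluster has local edge connectivity exactly $2k$ and every pair of agents in different clusters has local edge connectivity exactly $k$. For the special case $k=2$ this is explicit: take a Hamiltonian cycle on $v_1,\dots,v_n$ (with $n$ even) whose undirected capacities alternate between $2k-1=3$ and $1$, realised by letting each $v_{2i-1}$ buy $(v_{2i-1},v_{2i})$ with capacity $k=2$ and each $v_{2i}$ buy the two edges $(v_{2i},v_{2i-1})$ and $(v_{2i},v_{2i+1})$ each with capacity $1$. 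For general $k$ the analogous construction keeps each cluster densely wired internally so that intra-cluster pairs reach flow $2k$, and joins consecutive clusters by bridges of total capacity $k$.

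Given the construction, the utility computation is a direct count: there are $\tfrac{n}{k}\binom{k}{2}=\tfrac{n(k-1)}{2}$ intra-cluster pairs of flow $2k$ and $\binom{n}{2}-\tfrac{n(k-1)}{2}$ cross-cluster pairs of flow $k$, so
\[
\sum_{\{v,w\}}\lambda_{G^\ast}(v,w)=\binom{n}{2}k+\frac{n(k-1)}{2}\cdot k=\frac{nk(n+k-2)}{2},
\]
which gives $u(G^\ast)=\frac{2}{n(n-1)}\cdot\frac{nk(n+k-2)}{2}=k+\frac{k(k-1)}{n-1}$. Feasibility---that each agent uses exactly $k$ units of budget---follows directly from the construction by an elementary accounting.

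The main effort and main obstacle lies in verifying that $G^\ast$ is a NE. By Theorem~\ref{thm:all_budget_use_minFNCG} every agent already spends its full budget, so any deviation of agent $v$ must reduce the capacity of some owned edge by some $\ell\ge 1$ and reallocate those $\ell$ units to other edges. I would case-split on where the freed capacity is sent---within the same cluster, across to a different cluster, or a mixture---and use the Max-Flow Min-Cut Theorem to bound the change in each of $v$'s $n-1$ pairwise flows. The key observations are: (i) intra-cluster flows already equal the degree bound $2k$, so any reduction of $\ell$ on an edge incident to $v$ costs at least $\ell$ units of flow to some same-cluster neighbour; (ii) adding $\ell$ units across a cluster boundary can increase $v$'s flow to nodes of another cluster only up to the affected min cut, which grows from $k$ to at most $k+\ell$; and (iii) the number of destinations whose flow strictly improves after the reallocation is bounded by the size of the targeted cluster, so the aggregate gain is dominated by the aggregate loss. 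Averaging over the $n-1$ destinations shows that $v$'s utility does not strictly increase, so $v$ has no improving move. The most delicate subcase is when $v$ trades capacity between two distinct inter-cluster edges, which simultaneously shifts two min cuts; this is handled by a Menger-style argument tracking edge-disjoint paths through both affected clusters and verifying that the flow gained on one cross-cluster pair is matched by a corresponding loss on another.
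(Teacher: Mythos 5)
Your construction is genuinely different from the paper's, and unfortunately it is not a Nash equilibrium, so the core of the argument fails. Take your explicit $k=2$ instance with $n=6$: the cycle $v_1,\dots,v_6$ with undirected capacities $c(\{v_1,v_2\})=c(\{v_3,v_4\})=c(\{v_5,v_6\})=3$ and $c(\{v_2,v_3\})=c(\{v_4,v_5\})=c(\{v_6,v_1\})=1$, where $v_1$ owns $(v_1,v_2)$ with capacity $2$. Currently $u(v_1)=\frac{4+2+2+2+2}{5}=\frac{12}{5}$. Let $v_1$ deviate to $S'_{v_1}=\{(v_2,1),(v_4,1)\}$. Then $\deg(v_2)$ drops to $3$, so $\lambda(v_1,v_2)=3$ (loss of $1$), but now every cut separating $v_1$ from the cluster $\{v_3,v_4\}$ must cut two unit cycle edges \emph{plus} the new chord, so $\lambda(v_1,v_3)=\lambda(v_1,v_4)=3$ (gain of $1$ each), while $\lambda(v_1,v_5)=\lambda(v_1,v_6)=2$ are unchanged. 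The new utility is $\frac{3+3+3+2+2}{5}=\frac{13}{5}>\frac{12}{5}$, an improving move. The failure is precisely your step (iii): the bottleneck limiting cross-cluster flow is a \emph{shared} inter-cluster cut, so one unit of capacity pushed across it raises the flow to \emph{all} $k$ members of the target cluster simultaneously, whereas reducing one intra-cluster edge hurts at most the $k-1$ other members of the deviator's own cluster (in the example, only one node). The aggregate gain therefore dominates the aggregate loss, not the other way around; this kills the equal-size-cluster design for any $k\ge 2$, not just the small case. (Your utility arithmetic does happen to produce the right number $k+\frac{k(k-1)}{n-1}$, but that is moot once stability fails.)

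The paper's construction avoids exactly this trap by making the low-flow pairs bottlenecked \emph{locally} rather than by a shared cut: it takes a directed cycle on only $k$ nodes, each cycle edge of capacity $k$, and attaches the remaining $n-k$ nodes by giving each of them $k$ unit edges to the $k$ cycle nodes. There the cycle nodes have pairwise connectivity $n+k$ (their degree) and every pair involving an outer node has connectivity exactly $k$, capped by that outer node's own degree; redirecting capacity toward an outer node can therefore raise the flow to that single node only, which is offset by the loss inflicted on the neighbour whose incoming capacity was reduced. If you want to salvage your write-up, you should switch to this (or a similarly ``degree-capped'') construction; the verification then proceeds by the case analysis you outline, but with the crucial property that each reallocated unit of capacity benefits at most one destination.
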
 
\begin{proof}[Proofsketch]
We construct the following network $G=(V,E)$. We arrange $k$ nodes in a circle such that each node has one edge with capacity $k$ to the subsequent node. The remaining $n-k$ nodes each have a unit-weight edge to each node of the circle. See Figure~\ref{fig:NE_avggame} for an illustration of the construction. We omit the proof due to space constraints.
\end{proof}
\begin{figure}[ht]
\centering\includegraphics[width=.4\textwidth]{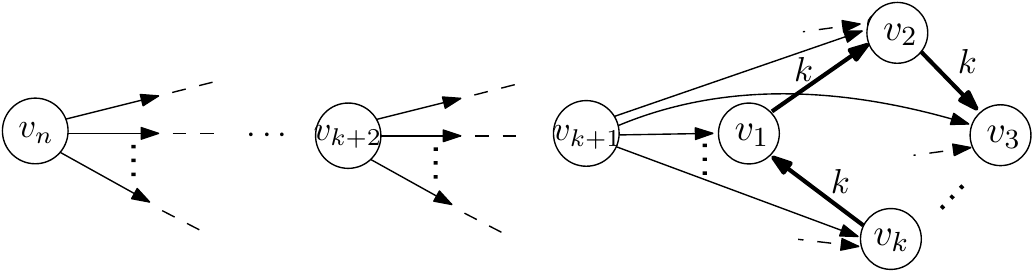}
\caption{NE network in the \avggame. Note that this network is not a NE in the \mingame since it is not $(k+1)$-edge-connected, e.g., $v_1$ can improve by reducing the capacity of the edge $(v_2,v_3)$ by 1 and increasing the capacity of the edge $(v_2,v_{k+1})$ by 1.}
\label{fig:NE_avggame}
\end{figure}

\begin{theorem}\label{thm:UB_NE_avggame2}
For a given $n$ and $k$ such that $k<n$, there is a NE network  $G$ such that $u(G)=k+1$ in the \avggame.
\end{theorem}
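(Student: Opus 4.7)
The plan is to exhibit an explicit NE network $G$ in the \avggame with social utility exactly $k+1$. Directly reusing the \mingame NE from Theorem~\ref{thm:UB_NE_mingame} is not an option: as noted in the caption of Figure~\ref{fig:NE_mingame}, agent $v_2$ there can shift its one-unit backward edge $(v_2,v_1)$ onto a new edge $(v_2,v_n)$ of capacity $1$, and because the far side of the chain then becomes more tightly connected to $v_2$, this raises the local flows from $v_2$ to every node beyond $v_2$ by one unit while losing only one unit of flow to $v_1$. So I would construct a replacement network that closes off this class of ``spend-one-unit-toward-the-far-side'' deviations while keeping the global min-cut at exactly $k+1$ and every agent's budget fully used.

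A natural candidate is either a more symmetric version of the \mingame chain (redistributing the small backward-edge capacities so that no single removable unit can be profitably redirected), or an asymmetric ``core plus periphery'' graph in the spirit of the construction from Theorem~\ref{thm:UB_NE_avggame1} (Figure~\ref{fig:NE_avggame}), tuned so that the global min-cut drops to exactly $k+1$. Whichever $G$ I choose, feasibility (each agent's outgoing capacities sum to $k$) is immediate from the description, and the social utility $u(G) = \frac{2}{n(n-1)} \sum_{\{u,v\}} \lambda_G(u,v)$ is computed by enumerating min-cuts: by design the global min-cut equals $k+1$, so $\lambda_G(u,v)\geq k+1$ for every pair, and the weights are engineered so that $\lambda_G(u,v)=k+1$ for essentially every pair (with the contributions of denser subregions balanced exactly), yielding $u(G)=k+1$.

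The main obstacle is verifying the NE condition. For each agent $v$ and each candidate deviation $S'_v$—reallocating capacity between two incident edges, replacing an incident edge by a new one, or fully reshuffling the budget—I need to show that $v$'s average local edge connectivity does not strictly increase. The workhorse is the Max-Flow Min-Cut Theorem applied to the cuts affected by the deviation: any unit of capacity $v$ removes from an edge that lies on a min-cut of the $(k+1)$-edge-connected backbone strictly reduces $\lambda_G(v,w)$ for every $w$ on the far side of that cut, whereas any unit $v$ adds on a new edge raises $\lambda_G(v,w')$ only up to the residual capacity of the other min-cuts separating $v$ from $w'$. The delicate part—and the reason the paper only sketches this proof—is accounting for these gains and losses across all $n-1$ pairs simultaneously, using the structural balance of the construction to conclude that the losses from a unit moved off the backbone always outweigh or match the corresponding gains, so no deviation is improving and $G$ is indeed a NE with $u(G)=k+1$.
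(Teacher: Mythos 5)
There is a genuine gap: you never actually exhibit the network $G$. The theorem is an existence statement whose entire content is a concrete construction, and your proposal replaces that construction with a description of the properties a construction would need to have (``whichever $G$ I choose,'' ``the weights are engineered so that\ldots''). Your diagnosis of why the \mingame chain from Theorem~\ref{thm:UB_NE_mingame} fails in the \avggame is correct, and your checklist for verifying a candidate (feasibility, computing $u(G)$ via the min-cuts, ruling out deviations pair by pair with Max-Flow Min-Cut) matches what any proof would have to do. But none of these steps can be carried out, even in principle, until a specific graph is on the table, so as written this is a proof plan rather than a proof.

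Moreover, the two candidate families you gesture at are not obviously repairable into what is needed. In the core-plus-periphery graph of Theorem~\ref{thm:UB_NE_avggame1}, every periphery node buys exactly $k$ unit edges and receives none, so its degree --- and hence its local edge connectivity to every other node --- is exactly $k$; no retuning of the core weights alone can raise the global min-cut to $k+1$ without changing how the periphery is attached. The paper instead uses a different, asymmetric star-plus-path construction: a center $c$, nodes $a_1,\ldots,a_{k-1}$ exchanging edges with $c$ (capacity $k$ toward $c$, capacity $1$ back), and nodes $b_1,\ldots,b_{n-k}$ forming a unit-capacity path that, together with an edge $(c,b_1)$ of capacity $1$ and heavy edges $(b_i,c)$, closes a cycle through $c$; every node other than $c$ then has degree exactly $k+1$ and all pairwise connectivities equal $k+1$, giving $u(G)=k+1$. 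To complete your argument you would need to either reproduce such a construction explicitly or supply a different one, and then carry out the deviation analysis you outline for that specific graph.
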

\begin{proof}[Proofsketch]
Consider the following star-like network $G=(V,E)$ with the set of nodes $V=\{c\}\cup\{a_1,\ldots,a_{k-1}\}\cup\{b_1,\ldots,b_{n-k}\}$. Here $c$ is a central node connected with all $a_i$ nodes by an edge with capacity 1. All $a_i$ nodes are connected with the central node by edges with capacity $k$. Each node $b_i, i=1,\ldots,n-k-1$ has an edge $(b_i,c)$ with capacity $k-1$, and only one node $b_{n-k}$ has an edge $(b_{n-k},c)$ of capacity $k$. Also, there is an edge $(c,b_1)$ with capacity 1. Finally, all $b_i$ nodes are sequentially connected by edges $(b_i,b_{i+1})$ with capacity 1. See Figure~\ref{fig:NE_avggame_2} for illustration of the construction. The proof of the statement is rather technical, so we omit it due to the space constrains.
\begin{figure}[ht]
\centering\includegraphics[width=.2\textwidth]{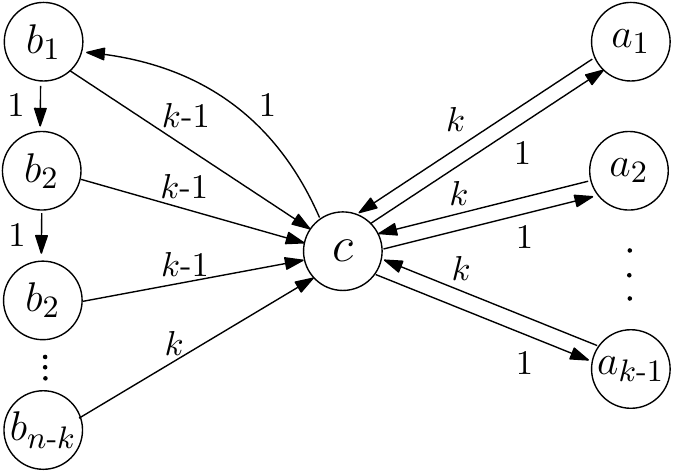}
\caption{NE network in the \avggame.}
\label{fig:NE_avggame_2}
\end{figure}
\end{proof}

\noindent A similar technique as in the proof of Theorem~\ref{thm:LB_NE_utility_mingame} can be applied to prove that the edge connectivity of any NE network in the \avggame is at least $k$. 


\begin{theorem}\label{thm:LB_NE_utility_avggame}
For $k\geq 2$, the edge connectivity of any NE network in the \avggame is at least $k$.
\end{theorem}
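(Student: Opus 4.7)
The plan is to mirror the contradiction argument of Theorem~\ref{thm:LB_NE_utility_mingame}. First I would suppose some NE network $G$ in the \avggame satisfies $\lambda(G)\leq k-1$, fix a cut $(A,B)$ of capacity at most $k-1$, and invoke Theorem~\ref{thm:cluster} to extract a $(k+1)$-cluster $C$. Because the induced subgraph on $C$ has edge connectivity at least $k+1 > k-1$, the cluster cannot be split by $(A,B)$, so without loss of generality $C\subseteq A$. Since $k\geq 2$, every vertex of a $(k+1)$-cluster has cluster-degree at least $3$, so I can pick $v\in C$, a cluster-neighbor $x\in C$ of $v$, and any $w\in B$ (non-empty because $(A,B)$ is a cut).

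Next I would analyze the candidate deviation of agent $x$: reduce $c(x,v)$ by one and create or increase $c(x,w)$ by one; call the resulting network $G'$ and the updated cluster $K'$. The task is to establish $u(x,G') > u(x,G)$, i.e.\ $\sum_{i\neq x}\bigl(\lambda_{G'}(x,i)-\lambda_G(x,i)\bigr) > 0$, contradicting the NE assumption. Two per-vertex contributions I would compute directly: every $(x,w)$-cut in $F(\stp)$ contains the edge $\{x,w\}$, so $\lambda_{G'}(x,w)=\lambda_G(x,w)+1$; and since the only modification affecting the $x$--$v$ max-flow is the single-unit decrement on $\{x,v\}$, any max-flow drops by at most one, giving $\lambda_{G'}(x,v)\geq \lambda_G(x,v)-1$. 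Hence the contributions at $v$ and $w$ already cancel.

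For the remaining terms I would argue non-negativity per vertex and then extract strict gain somewhere. For $i\in C\setminus\{v,x\}$ the cluster-plus-new-edge reasoning of Theorem~\ref{thm:LB_NE_utility_mingame} applies: $\lambda_{G'}(x,i)\geq \lambda_{K'}(x,i)+1\geq k+1$, which is at least $\lambda_G(x,i)-1$ upgraded by one unit through the added edge, so no loss occurs. For $i\in A\setminus C$ the same cluster bound keeps flow robust. The crucial contribution comes from $B$: because $\lambda_G(x,i)\leq k-1$ for every $i\in B$ (bound from cut $(A,B)$) while the new $(A,B)$ cut in $G'$ has capacity at most $k$, rerouting the extra unit through the added edge $\{x,w\}$ together with the $B$-internal structure (available by connectedness from Theorem~\ref{thm:NE_is_connected_minFNCG}) yields $\lambda_{G'}(x,i)\geq \lambda_G(x,i)$, with strict improvement for at least one $i\in B$ whenever the saturated cut $(A,B)$ in $G$ gains a full unit of cross-capacity in $G'$.

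The main obstacle is certifying \emph{strict} positivity of the sum in the degenerate configuration where all $\lambda_G(x,i)$ for $i\in B$ are strictly below $k-1$, so that the newly available cross-capacity is not exploited by any individual pair. In that case I would instead harvest strict gain from within the cluster by arguing that either the inequality $\lambda_{G'}(x,v)\geq \lambda_G(x,v)-1$ is in fact strict because the cluster's $(k+1)$-edge-connectivity combined with the new path through $w$ overcompensates the reduction on $\{x,v\}$, or some $i\in C\setminus\{v,x\}$ gains an extra edge-disjoint $x$-$i$ path routed via $\{x,w\}$ and a $B$-to-$C$ return path. Ruling out the all-loss configuration through a careful laminar analysis of min-cuts is where the real work sits, and this is precisely where the $(k+1)$-edge-connectivity of $C$, rather than the weaker $k$-edge-connectivity, is decisive.
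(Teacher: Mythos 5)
There is a genuine gap, and you have in fact located it yourself: the argument never establishes strict positivity of $\sum_{i\neq x}\bigl(\lambda_{G'}(x,i)-\lambda_G(x,i)\bigr)$, and several of the per-vertex bounds asserted along the way do not hold. Two concrete failures. First, for $i\in C\setminus\{v,x\}$ and $i\in A\setminus C$ you import the reasoning of Theorem~\ref{thm:LB_NE_utility_mingame}, which only needs the connectivity to stay \emph{above the threshold} $\lambda(G)$; in the \avggame every unit counts, so showing $\lambda_{G'}(x,i)\geq k+1$ does not exclude a real drop from, say, $\lambda_G(x,i)=k+7$ to $k+6$. Each such vertex can lose one unit (a min $x$-$i$-cut that separates $x$ from $v$ but keeps $w$ on $x$'s side pays for the reduction on $\{x,v\}$ and gains nothing from $\{x,w\}$), and if $|A|$ is large and $|B|$ is small these losses swamp the single guaranteed gain at $w$. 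Second, for $i\in B\setminus\{w\}$ the claim $\lambda_{G'}(x,i)\geq\lambda_G(x,i)$ with strict improvement somewhere is unjustified: a min $x$-$i$-cut in $G'$ that keeps $w$ on $x$'s side avoids the new edge entirely and can still have capacity exactly $\lambda(G)$, because $\lambda_G(w,i)$ for two arbitrary vertices of $B$ may itself equal $\lambda(G)$.

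The paper closes exactly this hole with an idea your proposal is missing: it runs the cluster-extraction procedure of Theorem~\ref{thm:cluster} on \emph{both} sides of the min-cut, obtaining sets $X\subseteq A$ and $Y\subseteq B$ within which all pairwise local connectivities are at least $k$, assumes w.l.o.g.\ $|Y|\geq|X|$, and lets an agent $v$ in the \emph{smaller} side's cluster shift one unit of capacity from a cluster-internal edge to some $y\in Y$. Landing the new edge inside the $k$-connected set $Y$ (rather than at an arbitrary $w\in B$) guarantees that the gain of one unit propagates to \emph{all} of $Y$, since cuts separating $y$ from other vertices of $Y$ already have capacity at least $k>\lambda(G)$; and the losses are confined to the at most $|X|-1$ other vertices of $X$. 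The utility change is then at least $-\frac{|X|-1}{n-1}+\frac{|Y|}{n-1}>0$ by the cardinality comparison --- a counting argument over the two sides, not the laminar min-cut analysis you anticipate. Without choosing the direction of the deviation by comparing $|X|$ and $|Y|$, and without targeting a $k$-connected subset of the far side, the ``all-loss configuration'' you worry about cannot be excluded.
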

\begin{proof}[Proofsketch]
Assume $G$ is a NE and $\lambda(G)<k$. Consider a min-cut $(A,B)$ of size $\lambda(G)$. By performing the same procedure as in the proof of Theorem~\ref{thm:cluster} in each set $A$ and $B$, we get two sets  $X\subseteq A$ and $Y\subseteq B$ of a maximum size such that for each pair of nodes $v_1,v_2$ in the same set, $\lambda_G(v_1,v_2)\ge k$. W.l.o.g., $|Y|\geq |X|$. For some edge $(v,w)$ in a $k$-cluster in $X$, agent $v$ can reduce its capacity by 1 and create a new edge $(v,y)$ with capacity 1 to $y\in Y$ (or increase its capacity by 1 if it exists). The local edge connectivity decreases by at most 1 only between $u$ and  nodes in $X$, and increases by 1  between $u$ and all nodes in $Y$. Thus, the utility changes by at least  $-\frac{|X|-1}{n-1}+\frac{|Y|}{n-1}>0$, i.e., it is an improving move.
\end{proof}


\begin{theorem}\label{thm:PoA_UB_avggame}
For the \avggame we have PoA$<2$.
\end{theorem}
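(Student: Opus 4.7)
The plan is to bound the social utility of any NE strictly from below by $k$, and then divide into $u(\text{OPT}) = 2k$ (from Theorem~\ref{thm:opt_produced_by_the_alg}) to obtain a PoA strictly less than $2$. Concretely, I would write the social utility as
\[
u(\stp) \;=\; \frac{1}{n(n-1)} \sum_{v \neq w} \lambda_{G(\stp)}(v,w),
\]
and argue that every summand is at least $k$, with at least a constant number of summands being strictly larger.

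For the first ingredient, Theorem~\ref{thm:LB_NE_utility_avggame} already gives $\lambda(G) \geq k$ for $k \geq 2$, and for $k = 1$ we get $\lambda(G) \geq 1 = k$ directly from Theorem~\ref{thm:NE_is_connected_minFNCG} (NE networks are connected). Hence in every NE and for every pair $v,w$ we have $\lambda_{G(\stp)}(v,w) \geq k$.

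For the second ingredient, I would invoke Theorem~\ref{thm:cluster} to obtain a $(k+1)$-cluster $C \subseteq V$ of some size $s \geq 2$. Every pair of distinct nodes in $C$ has local edge connectivity at least $k+1$ already in the induced subgraph, hence also in $G$. There are $s(s-1) \geq 2$ such ordered pairs, and all remaining pairs contribute at least $k$ each. Plugging this into the formula above yields
\[
u(\stp) \;\geq\; k + \frac{s(s-1)}{n(n-1)} \;\geq\; k + \frac{2}{n(n-1)} \;>\; k.
\]
Combining with $u(\text{OPT}) = 2k$ gives
\[
\text{PoA} \;\leq\; \frac{2k}{k + \tfrac{2}{n(n-1)}} \;<\; 2,
\]
which is exactly the statement of the theorem.

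There is no real obstacle here: both key inputs (edge connectivity $\geq k$ and the existence of a $(k+1)$-cluster of size at least $2$) are already in place from the previous section, and the rest is a one-line averaging argument. The only subtlety worth flagging in the write-up is the $k=1$ case, which falls outside the hypothesis of Theorem~\ref{thm:LB_NE_utility_avggame} and must be handled separately via connectedness; I would mention this explicitly. Note that the bound obtained is strict but not quantitatively tight against the lower bounds of Theorems~\ref{thm:UB_NE_avggame1} and~\ref{thm:UB_NE_avggame2}, as expected given that the PoA can approach $2$ as $n$ grows.
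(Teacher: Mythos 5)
Your proof is correct and follows essentially the same route as the paper: combine Theorem~\ref{thm:LB_NE_utility_avggame} (every pair has local connectivity at least $k$) with Theorem~\ref{thm:cluster} (a $(k+1)$-cluster supplies at least two ordered pairs with connectivity at least $k+1$) to get $u(\stp)>k$ and hence PoA~$<2$; your per-pair accounting is just a rephrasing of the paper's per-node accounting. The one place you differ is the $k=1$ case, where your argument via connectedness plus the $2$-cluster is actually more complete than the paper's remark, which merely exhibits one good equilibrium rather than bounding all of them.
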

\begin{proof}
For $k=1$ a trivial NE of social utility 2 is a directed cycle with all edge capacities equal 1. 

For general $k\geq 2$, by Theorem~\ref{thm:LB_NE_utility_avggame}, any NE is at least $k$-edge-connected. On the other hand, by Theorem~\ref{thm:cluster}, any NE contains a $(k+1)$-cluster of at least two nodes. Thus, at least two nodes have their utility of at least $\frac{k+1+k(n-2)}{n-1}>k$ and all other nodes have the utility of at least $k$.  Hence the social utility is strictly grater than $\frac{k(n-2)+2k}{n}=k$. 
\end{proof}
\noindent Now we provide an almost matching lower bound on the PoA.

\begin{theorem}\label{thm:PoA_LB_avggame}
For the \avggame, we have $PoA\geq\max\Bigl\{\frac{2k}{k+k(k-1)/(n-1)}, \frac{2k}{k+1}\Bigr\}$.
\end{theorem}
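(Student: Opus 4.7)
The plan is straightforward and piggy-backs entirely on the two NE constructions already exhibited in Theorems~\ref{thm:UB_NE_avggame1} and~\ref{thm:UB_NE_avggame2}. Since Theorem~\ref{thm:opt_produced_by_the_alg} establishes $u(\opt)=2k$ in the \avggame, any explicit NE network $G$ yields a lower bound on the PoA of the form $2k/u(G)$. The task thus reduces to reading off the social utility of each of the two witness networks and taking the weaker (i.e., larger PoA) of the two resulting bounds.

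First, I would invoke Theorem~\ref{thm:UB_NE_avggame1} (valid for $k\ge 2$), which provides a NE with social utility $k+\tfrac{k(k-1)}{n-1}$, giving
\[
\text{PoA}\;\ge\;\frac{2k}{k+\frac{k(k-1)}{n-1}}.
\]
Next, I would invoke Theorem~\ref{thm:UB_NE_avggame2}, which provides a NE with social utility exactly $k+1$, giving $\text{PoA}\ge \tfrac{2k}{k+1}$. Taking the maximum of these two lower bounds immediately yields the claim. (The case $k=1$ is already covered by the second construction, where $k+1=2$ matches the first bound.)

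To match the case distinction appearing in Table~\ref{tab:contribution}, I would observe that $\tfrac{k(k-1)}{n-1}\le 1$ is equivalent to $k^2-k\le n-1$, i.e., $k\le \tfrac{1+\sqrt{1+4(n-1)}}{2}=0.5+\sqrt{n-0.75}$. In this regime $k+\tfrac{k(k-1)}{n-1}\le k+1$, so the first bound dominates; otherwise the second one does. The only genuinely non-routine work supporting this theorem is the verification that the two explicit networks are really Nash equilibria, which is the content of the (omitted) proofs of Theorems~\ref{thm:UB_NE_avggame1} and~\ref{thm:UB_NE_avggame2}; once those are granted, the present statement follows by plugging two numbers into one division.
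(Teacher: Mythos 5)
Your proposal is correct and follows essentially the same route as the paper: cite the two NE constructions of Theorems~\ref{thm:UB_NE_avggame1} and~\ref{thm:UB_NE_avggame2}, divide the optimal social utility $2k$ from Theorem~\ref{thm:opt_produced_by_the_alg} by their respective social utilities, and take the larger of the two resulting bounds, with the same threshold $k\leq 0.5+\sqrt{n-0.75}$ determining which one dominates. The only quibble is your parenthetical on $k=1$: there the first bound degenerates to $2k/k=2$ and is \emph{not} matched by the $(k+1)$-utility construction (which only gives $2/2=1$), but the paper's own proof silently carries the same implicit restriction to $k\geq 2$ inherited from Theorem~\ref{thm:UB_NE_avggame1}.
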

\begin{proof}
By Theorem~\ref{thm:UB_NE_avggame1} and Theorem~\ref{thm:UB_NE_avggame2} the social utility of a worst-case equilibrium network is at most  $\min\{k+\frac{k(k-1)}{n-1}, k+1\}$. Therefore, for $k\leq 0.5+\sqrt{n-0.75}$ the PoA is at least $2k/\left(k+\frac{k(k-1)}{n-1}\right)=\frac{2(n-1)}{n+k-2}$. Thus, for large $n$, the PoA tends to $2$, i.e., the upper bound. With a high budget, i.e., for $k>0.5+\sqrt{n-0.75}$, the PoA is at least $\frac{2k}{k+1}$. 
\end{proof}

\section{Game Dynamics}
We analyze whether our games are guaranteed to converge to an NE under improving response dynamics. Unfortunately, this is not true for the \mingame and the \avggame.
\begin{theorem}\label{thm:IRC}
The \mingame and the \avggame do not admit the FIP.
\end{theorem}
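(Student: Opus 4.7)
The plan is to disprove the FIP by exhibiting, for each of the two games, an explicit cyclic sequence of improving-response moves, i.e., a best-response (or improving-response) cycle. Since any potential game has the FIP by the theorem of Monderer and Shapley, producing such a cycle is sufficient.

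First I would fix a small instance, e.g.\ $n$ between $4$ and $7$ and $k=2$, and describe an initial feasible strategy profile $\stp^{(0)}$ together with the associated weighted directed network $G(\stp^{(0)})$. I would then specify a sequence $\stp^{(0)}, \stp^{(1)}, \dots, \stp^{(t)} = \stp^{(0)}$ where consecutive profiles differ in exactly one agent's strategy, and for each transition I would list the move (which edge is reduced in capacity and which is added or increased) and compute the affected local edge-connectivities $\lambda_{G}(\cdot,\cdot)$ using min-cuts in the undirected version $F(\stp)$. Because only the moving agent's strategy changes, max-flows only need to be recomputed for pairs involving endpoints of the modified edges, keeping the verification manageable.

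For the \mingame the cycle is designed so that each move either strictly increases the global edge connectivity $\lambda(G(\stp))$ (first component of the utility) or keeps it fixed while strictly increasing the number of \bA neighbours $u_2$. The trick is that an agent $v$ owning an edge on a tight min-cut can relocate some of its capacity to a new edge that shifts where the bottleneck lies; this temporarily helps $v$ (raising $u_2$ or $\lambda$), but creates a new tight cut elsewhere, giving another agent the symmetric opportunity, and so on around the cycle. For the \avggame the improving condition is quantitative: each move must strictly increase the moving agent's average local connectivity $\sum_{i\ne v}\lambda_{G}(v,i)/(n-1)$. Here the natural pattern is a ``rotating'' reallocation of a unit of capacity along a long path or near-symmetric structure, where swapping one endpoint gains a unit of flow to one side at the cost of less than a unit on the other side for the mover, but induces the same asymmetry for the next agent.

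The main obstacle is constructing the cycle itself, in particular guaranteeing that each intermediate move is \emph{strictly} improving (not merely feasible) while ultimately returning to the original profile. Once the example is found, the rest is routine bookkeeping of min-cuts. If a single example serves both games simultaneously, I would present it once and verify both utility functions increase along the cycle; otherwise I would give two closely related constructions, emphasizing that the \mingame cycle uses the lexicographic tie-breaking via $u_2$ whereas the \avggame cycle exploits the averaging of flow values. Either way, the existence of the cycle rules out the FIP, and hence rules out any ordinal potential function, proving Theorem~\ref{thm:IRC}.
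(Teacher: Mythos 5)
Your overall strategy is exactly the one the paper uses: the proof of Theorem~\ref{thm:IRC} consists of exhibiting an explicit improving-response cycle for each game (with $k=2$ for the \mingame and $k=3$ for the \avggame, given as concrete networks in Figure~\ref{fig:IRC_mingame} and Figure~\ref{fig:IRC_avggame}), and noting that the first and last networks of each sequence are isomorphic, so the moves can be repeated forever. Your qualitative description of why such cycles should exist --- an agent relocating capacity off a tight cut helps itself but shifts the bottleneck so that another agent gets a symmetric improving move --- matches the mechanism at work in the paper's examples.

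However, there is a genuine gap: you never actually produce the cycle. You correctly identify that ``the main obstacle is constructing the cycle itself,'' but then leave that obstacle unresolved. For a nonexistence-of-FIP result, the explicit example \emph{is} the proof; everything else (the reduction to finding a cycle, the observation that only pairs touching the modified edges need their max-flows recomputed, the remark about potential functions) is routine framing. Without a concrete initial profile, a concrete list of single-agent deviations, and a verification that each deviation strictly increases the mover's utility --- lexicographically via $(u_1,u_2)$ in the \mingame, and via the average of the $\lambda_G(v,\cdot)$ values in the \avggame --- the theorem is not established. One further small point: you insist on $\stp^{(t)}=\stp^{(0)}$ exactly, whereas returning to an \emph{isomorphic} profile (as the paper does) already suffices to generate an infinite improving sequence and is typically easier to arrange; requiring exact equality makes your search for an example unnecessarily harder.
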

\begin{proof}[Proofsketch]
We provide an improving response cycle (IRC) for each variant of the game depicted in Figure~\ref{fig:IRC_mingame} and Figure~\ref{fig:IRC_avggame}. Note that the first and the last networks are isomorphic in both sequences, which implies that we indeed have IRCs.
\end{proof}
\begin{figure}[ht]
\centering\includegraphics[width=.48\textwidth]{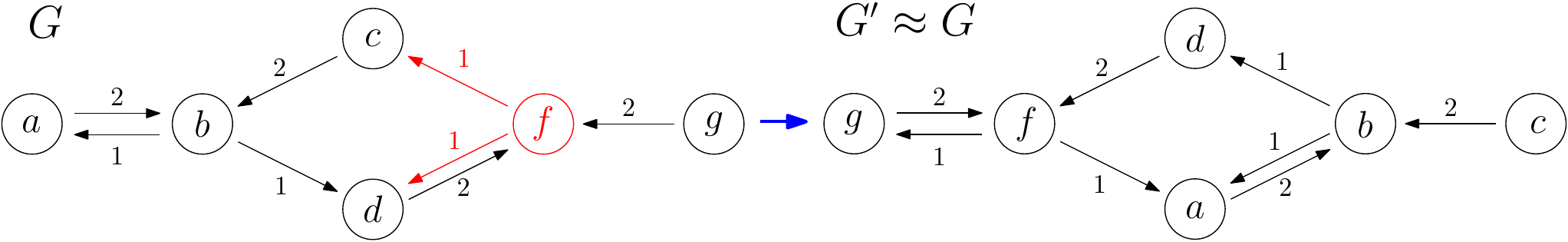}
\caption{An IRC for the \mingame with $k=2$. Since $G$ and $G'$ are isomorphic, the sequence of the improving moves recurs to~$G$.}
\label{fig:IRC_mingame}
\end{figure}

\begin{figure}[ht]
\centering\includegraphics[width=.4\textwidth]{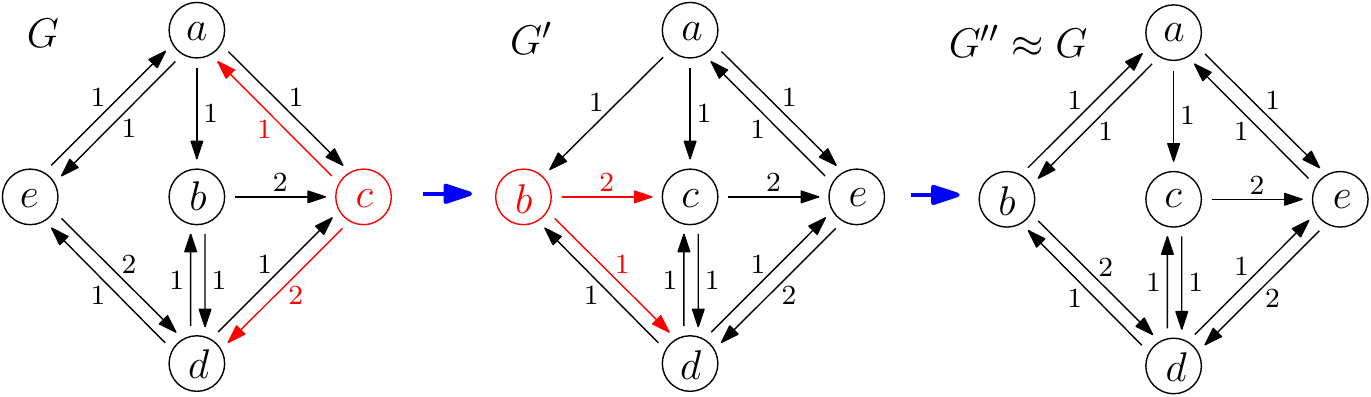}
\caption{An IRC for the \avggame with $k=3$. Since $G$ and $G''$ are isomorphic, the sequence of the improving moves recurs to~$G$.}
\label{fig:IRC_avggame}
\end{figure}

\section{Conclusion}
We made the first steps into the promising direction of investigating network formation games with bandwidth/connectivity considerations. For this, we proposed two versions: the \mingame where agents strive to maximize the connectivity of the entire network; and the \avggame where agents try to maximize their average flow value towards all other agents.
For both versions our main focus was the analysis of equilibrium networks. In particular, we proved that NE networks contain a highly-connected component in the \avggame, while in the \mingame the entire NE network must be highly-connected. This may explain the occurrence of $k$-core structures in many real-world networks. 

Our main results are (almost) tight bounds of less than $2$ on the PoA which shows that NE networks are close to optimum and no external coordination is needed in our games.

An important direction for future work is a generalization of the presented model to non-uniform budgets, non-integer edge capacities and a combination of bandwidth and centrality objectives for the agents.

\bibliographystyle{abbrv}
\bibliography{paper_bib}
\end{document}